\newlength{\additionaltextwidth}
\renewcommand*{\backref}[1]{}
\renewcommand*{\backrefalt}[4]{%
\ifcase #1%
\marginpar{\tiny no cite}
\or
 $\rightarrow$~p.~#2.%
\else
  $\rightarrow$~pp.~#2.%
\fi
}
\newcommand{\mytitle}{Combinatorial Voter Control in Elections}
\title{\mytitle\thanks{LB was supported by the Alexander von Humboldt Foundation. JC was supported by the Studienstiftung des Deutschen Volkes. PF has been supported by the DFG project PAWS (NI 369/10). NT is supported by the DFG Research Training Group ``Methods for Discrete Structures''~(GRK~1408). This work has been partly supported by COST Action IC1205 on Computational Social Choice.}}
\author[1]{Laurent Bulteau}
\author[1]{Jiehua Chen}
\author[2]{Piotr Faliszewski}
\author[1]{Rolf Niedermeier}
\author[1]{Nimrod Talmon}
\affil[1]{Institut f\"ur Softwaretechnik und Theoretische Informatik,
  TU Berlin, Germany, \texttt{l.bulteau@gmail.com, jiehua.chen@tu-berlin.de, nimrodtalmon77@gmail.com, rolf.niedermeier@tu-berlin.de}}
\affil[2]{AGH University of Science and Technology, Krakow, Poland,
  \texttt{faliszew@agh.edu.pl}}
\date{}
\newcommand{\probCCCAVLong}{\textsc{Combinatorial Constructive Control by Adding Voters}\xspace}
\newcommand{\probCCAVLong}{\textsc{Constructive Control by Adding Voters}\xspace}
\newcommand{\probCCAV}{\textsc{CC-AV}\xspace}
\newcommand{\probCCCAV}{\textsc{C-CC-AV}\xspace}
\newcommand{\probCCCAVInstance}{\ensuremath{((\electionC, \electionV), W, d, \combRule, p\in \electionC, k)}}
\newcommand{\probSetCover}{\textsc{Set Cover}\xspace}
\newcommand{\probClique}{\textsc{Clique}\xspace}
\newcommand{\probPVC}{\textsc{Partial Vertex Cover}\xspace}
\newcommand{\probVC}{\textsc{Vertex Cover}\xspace}
\newcommand{\PVC}{\textsc{PVC}\xspace}
\newcommand{\probSAT}{\textsc{(2,2)-3SAT}\xspace}
\newcommand{\np}{{\mathsf{NP}}}
\newcommand{\fpt}{{\mathsf{FPT}}}
\newcommand{\ilpfpt}{{\mathsf{ILP}\textrm{-}\mathsf{FPT}}}
\newcommand{\xp}{{\mathsf{XP}}}
\newcommand{\wone}{{\mathsf{W[1]}}}
\newcommand{\wtwo}{{\mathsf{W[2]}}}
\newcommand{\p}{{\mathsf{P}}}
\newcommand{\pref}{\ensuremath{\succ}}
\newcommand{\calR}{\ensuremath{\mathcal{R}}}
\newcommand{\electionC}{\ensuremath{C}}
\newcommand{\electionV}{\ensuremath{V}}
\newcommand{\electionW}{\ensuremath{W}}
\newcommand{\combRule}{\ensuremath{\kappa}\xspace}
\newcommand{\assignment}{bun\-dling function\xspace}
\newcommand{\assignmentnoxspace}{bun\-dling function}
\newcommand{\assignments}{bun\-dling functions\xspace}
\newcommand{\fullassignment}[1][d]{full-$#1$ bun\-dling function\xspace}
\newcommand{\fullassignments}[1][d]{full-$#1$ bun\-dling functions\xspace}
\newcommand{\nonfullassignment}[1][d]{$#1$-bounded bundling function\xspace}
\newcommand{\anonassignment}{anonymous bun\-dling function\xspace}
\newcommand{\anonassignments}{anonymous bun\-dling functions\xspace}
\newcommand{\swapdist}{swap distance\xspace}
\newcommand{\bundlinggraph}{bundling graph\xspace}
\newcommand{\Bundlinggraph}{Bundling graph\xspace}
\newcommand{\seq}[1]{\ensuremath{\langle #1\rangle}}
\newcommand{\leaderanonymous}{leader-anonymous\xspace}
\newcommand{\followeranonymous}{follower-anonymous\xspace}
\newcommand{\votertype}[1]{$#1$-voter\xspace}
\newcommand{\voterstype}[1]{$#1$-voters\xspace}
\newcommand{\firstvoter}[1]{\ensuremath{1\mathrm{st}(#1)}}
\newcommand{\secondvoter}[1]{\ensuremath{2\mathrm{nd}(#1)}}
\newtheorem{definition}{Definition}
\newtheorem{theorem}{Theorem}
\newtheorem{lemma}{Lemma}
\newtheorem{observation}{Observation}
\crefname{subsection}{Subsection}{Subsections}
\crefname{section}{Section}{Sections}
\crefname{table}{Table}{Tables}
\crefname{figure}{Figure}{Figures}
\crefname{algorithm}{Algorithm}{Algorithms}
\crefname{theorem}{Theorem}{Theorems}
\crefname{definition}{Definition}{Definitions}
\crefname{corollary}{Corollary}{Corollary}
\crefname{proposition}{Proposition}{Propositions}
\crefname{observation}{Observation}{Observations}
\crefname{lemma}{Lemma}{Lemmas}
\crefname{example}{Example}{Examples}
\crefname{reduction}{Reduction}{Reductions}
\crefname{algorithm}{Algorithm}{Algorithms}
\crefname{appendix}{Appendix}{Appendices}
\newcommand{\probDef}[3]{
  \begin{quote}
   #1\\
  \textbf{Input:} #2\\
  \textbf{Question:} #3
  \end{quote}
}
\newcommand{\mytabref}[1]{[\hyperref[#1]{Thm.~\ref*{#1}}]}
\newcommand{\mytabobs}[1]{[\hyperref[#1]{Obs.~\ref*{#1}}]}
\newcommand{\obsCCCAVinXPk}{
    Both \textsc{Plurality}-\probCCCAV and \textsc{Condorcet}-\probCCCAV
    are solvable in $O(n^{k}\cdot n \cdot m \cdot \mathrm{winner})$
    time, where $\mathrm{winner}$ is the complexity of determining
    Plurality/Condorcet winners.
}
\newcommand{\thmwonekanonbthree}{\textsc{Plurality}-\probCCCAV is $\np$-hard and $\wone$-hard when parameterized by the solution size~$k$, 
  even when the maximum
  bundle size~$b$ is three and the \assignment is anonymous.
}
\newcommand{\thmwtwokmtwo}{Both \textsc{Plurality}-\probCCCAV and \textsc{Condorcet}-\probCCCAV parameterized by the solution size~$k$ are $\wtwo$-hard, 
  even for two alternatives.}
\newcommand{\thmfptm}{For anonymous \assignments, 
  both \textsc{Plurality}-\probCCCAV and \textsc{Condorcet}-\probCCCAV pa\-ram\-e\-terized by the number~$m$ of alternatives
  are fixed-parameter tractable.}
\newcommand{\thmnphbtwo}{\textsc{Plurality}-\probCCCAV is $\np$-hard even if the maximum bundle size~$b$ is two.}
\newcommand{\thmfullbthreehard}{
  If \combRule is a \fullassignment, then \textsc{Plurality}-\probCCCAV is $\np$-hard even if the maximum bundle size~$b$ is three.  
}
\newcommand{\thmnphdonebfour}{
  \textsc{Plurality}-\probCCCAV is $\np$-hard even for \fullassignments[1] and
  even if the maximum bundle size~$b$ is four.
}
\newcommand{\thmspwonehk}{
  Both \textsc{Plurality}-\probCCCAV and \textsc{Condorcet}-\probCCCAV parameterized by the solution size~$k$
  are $\wone$-hard
  for single-peaked elections,
  even for \fullassignments[1].
}
\newcommand{\lemplusc}{
  Let $I=\probCCCAVInstance$ be a \textsc{Plurality}-\probCCCAV instance such that $(\electionC, \electionV\cup \electionW)$ is single-crossing
  and $\combRule$ is a \fullassignment. 
  Then, the following statements hold:
  \begin{enumerate}[(i)]
    \item\label{lem:p-voters-consecutive} The \voterstype{p} are ordered consecutively along the single-crossing order.
    \item\label{lem:|non-p-voter-bundles|<=2} If $I$ is a yes instance, 
    then there is a subset~$\electionW' \subseteq \electionW$ of size at most $k$
    such that all bundles of voters $w \in \electionW'$ contain only \voterstype{p},
    except at most two bundles which may contain some non-\voterstype{p}.
  \end{enumerate}
}
\newcommand{\lemcondsc}{
  Let $(\electionC, \electionV \cup \combRule(\electionW'))$ be a single-crossing
  election with single-crossing voter order~$\langle x_1, x_2,$
  $\ldots, x_{z}\rangle$ and
  set $X_{\mathrm{median}}:=\{x_{\lceil
    z/2 \rceil}\} \cup \{x_{z/2 + 1} \text{ if } z \text{ is even}\}$,
  where $z=|V|+|\combRule(\electionW')|$.
  Alternative~$p$ is a (unique) Condorcet winner in $(\electionC,
  \electionV \cup \combRule(\electionW'))$ if and only if every voter
  in $X_{\mathrm{median}}$ is a \votertype{p}.
}
\newcommand{\thmPsc}{
  Both \textsc{Plurality}-\probCCCAV and \textsc{Condorcet}-\probCCCAV
  are po\-ly\-nomial-time solvable for the single-crossing case with \fullassignments.
}
\newcommand{\obsbtwofullp}
{  If $\combRule$ is a \fullassignment and 
  the maximum bundle size~$b$ is two, 
  then \textsc{Plurality}-\probCCCAV is polynomial-time solvable. }
\begin{document}

\maketitle
\thispagestyle{plain}
\setcounter{footnote}{0}

\begin{abstract}
  Voter control problems model situations such as an external agent trying to
  affect the result of an election by adding voters, for example by convincing
  some voters to vote who would otherwise not attend the
  election. Traditionally, voters are added one at a time, with the goal of
  making a distinguished alternative win by adding a minimum number of
  voters. In this paper, we initiate the study of combinatorial variants of
  control by adding voters: In our setting, when we choose to add a voter~$v$,
  we also have to add a whole bundle $\kappa(v)$ of voters associated with
  $v$.  We study the computational complexity of this problem for two of the
  most basic voting rules, namely the Plurality rule and the Condorcet rule.
\end{abstract}

\section{Introduction}
We study the computational complexity of control by adding
voters~\cite{BTT92,HHR07}, investigating the case where the sets of voters that
we can add have some combinatorial structure.
The problem of election control by adding voters models situations
where some agent (e.g., a campaign manager for one of the
alternatives) tries to ensure a given alternative's victory by
convincing some undecided voters to vote. Traditionally, in this
problem we are given a description of an election
(that is, a set~$C$ of alternatives and a set~$V$ of voters who decided to vote),
and also a set $W$ of undecided voters
(for each voter in $V \cup W$ we assume to know how this
voter intends to vote which is given by a linear order of the set~$C$;
we might have good approximation of this knowledge
from preelection polls).
Our goal is to ensure that our preferred
alternative $p$ becomes a winner, by convincing as few voters from $W$
to vote as possible (provided that it is at all possible to ensure
$p$'s victory in this way).

Control by adding voters corresponds, for example, to situations where
supporters of a given alternative make direct appeals to other
supporters of the alternative to vote (for example, they may stress the
importance of voting, or help with the voting process by offering
rides to the voting locations, etc.).  Unfortunately, in its
traditional phrasing, control by adding voters does not model
larger-scale attempts at convincing people to vote. For example, a
campaign manager might be interested in airing a TV advertisement that
would motivate supporters of a given alternative to vote (though, of
course, it might also motivate some of this alternative's enemies),
or maybe 
launch viral
campaigns, where friends convince their own friends to vote. 
It is clear that the sets of voters that we can add
should have some sort of a combinatorial structure.
For instance, 
a TV advertisement appeals to a particular group of voters and we can add
all of them at the unit cost of airing the advertisement.
A public speech in a given neighborhood will convince a particular group of
people to vote at a unit cost of organizing the meeting or convincing a
person to vote will ``for free'' also convince her friends to vote.

The goal of our work is to formally define an appropriate computational problem modeling a
combinatorial variant of control by adding voters and to study
its computational complexity. Specifically, we focus on the
Plurality rule and the Condorcet rule,
mainly because 
the Plurality rule is the most widely used rule in
practice, 
and it is one of the few rules for which the
standard variant of control by adding voters is solvable in polynomial
time~\cite{BTT92}, whereas for the Condorcet rule the problem is
polynomial-time solvable for the case of single-peaked
elections~\cite{FHHR11}. For the case of single-peaked
elections, in essence, all our hardness results for the Condorcet rule
directly translate to all Condorcet-consistent voting rules, a large
and important family of voting rules.
We defer the formal details, definitions, and concrete results to
the following sections. Instead, we state the high-level, main 
messages of our work:
\begin{itemize}
\item 
  Many typical
  variants of combinatorial control by adding voters are intractable,
  but there is also a rich landscape of tractable cases.
\item Assuming that voters have single-peaked preferences does not
  lower the complexity of the problem (even though it does so in many
  election problems~\cite{BBHH2010,Con09,FHHR11}). 
  On the contrary, assuming single-crossing preferences does lower the
  complexity of the problem.
\end{itemize}
We believe that our setting of combinatorial control, and---more
generally---combinatorial voting, 
offers a very fertile ground for future research and we intend
the current paper as an initial step.

\medskip
\noindent
\emph{Related Work.}
\citet{BTT92} first studied the concept of election control by
adding/deleting voters or alternatives in a given election. 
They studied the constructive variant of the problem, where the goal is to
ensure a given alternative's victory (and we focus on this variant of
the problem as well). The destructive variant, where the goal is to
prevent someone from winning, was introduced by \citet{HHR07}. These
papers focused on the Plurality rule and the Condorcet rule (and the Approval
rule, for the destructive case of \citet{HHR07}). Since then, many other
researchers extended this study to a number of other rules and
models~\cite{BU09,FHHR09,FHH11,FHH13,LFZL09,LZ10,MPRZ08,PX12}.

In all previous work on election control, the authors always assumed
that one could affect each entity of the election at unit cost only.
For example, 
one could add a voter at a unit cost and adding two voters always
was twice as expensive as adding a single voter. Only the paper of
\citet{FHH13}, where the authors study control in weighted elections,
could be seen as an exception: One could think of adding a voter of
weight $w$ as adding a group of $w$ voters of unit weight. On the one
hand, the weighted election model does not allow one to express rich
combinatorial structures as those that we study here, and on the other
hand, in our study we consider unweighted elections only (though
adding weights to our model would be seamless).

The specific combinatorial flavor of our model has been inspired by
the seminal work of \citet{RMPAHR98Z}\footnote{According to google scholar, accessed April~2014, cited more than 1000~times.}
on \emph{combinatorial auctions} (see, e.g., \citet{San06} for additional
information).  
There, bidders can place bids on combinations of items such that the bid on the combination of a set of items might be less than, equal to, or greater than the sum of the individual bids on each element from the same set of items.  
While in combinatorial auctions one ``bundles'' items to bid on, in our scenario one bundles voters.

In the computational social choice literature, combinatorial voting is
typically associated with scenarios where voters express opinions over
a set of items that themselves have a specific combinatorial structure
(typically, one uses CP-nets to model preferences over such
alternative sets~\cite{BBDHP04}). For example, \citet{CLX09} studied a
form of control in this setting and \citet{NMFK12Z} studied bribery problems.  
In contrast, we use the standard model of elections where
all alternatives and preference orders are given explicitly, but we
have a combinatorial structure of the sets of voters that can be
added.

\section{Preliminaries}\label{sec:pre}
We assume familiarity with standard notions regarding algorithms and
complexity theory. For each nonnegative integer $z$, we write $[z]$ to
mean $\{1, \ldots, z\}$.

\paragraph{Elections.}
An election~$E:=(\electionC,\electionV)$ consists of 
a set~$C$ of $m$~alternatives and a set~$V$ of $|V|$~voters~$v_1,v_2,\ldots, v_{|V|}$. 
Each voter~$v$ has a linear order~$\pref_{v}$ over the set~$\electionC$,
which we call a \emph{preference order}. 
For example, let~$\electionC=\{c_1,c_2,c_3\}$ be a set of alternatives.
The preference order~$c_1\pref_{v} c_2 \pref_{v} c_3$ of voter~$v$ indicates that
$v$ likes $c_1$ the best~($1$\textsuperscript{st} position), then~$c_2$, 
and $c_3$ the least~($3$\textsuperscript{rd} position).
We call a voter~$v\in \electionV$ a \emph{\votertype{c}}
if $c$ is at the first position of her preference order.
Given a subset~$C'\subseteq C$ of alternatives, if not stated explicitly,
we write $\seq{C'}$ to denote an arbitrary but fixed preference order over~$C'$. 

\paragraph{Voting Rules.}
A voting rule $\calR$ is a function that given an election~$E$ outputs
a (possibly empty) set $\calR(E) \subseteq C$ of the (tied) election
winners.  
We study the Plurality rule and the Condorcet rule.  Given an election, the
\emph{Plurality score} of an alternative~$c$ is the number of voters
that have $c$ at the first position in their preference orders; an
alternative is a Plurality winner if it has the maximum Plurality
score. An alternative $c$ is a \emph{Condorcet
  winner}~\cite{dC85} 
if it beats all other alternatives in
head-to-head contests. That is, $c$ is a \emph{Condorcet winner} in
election~$E=(\electionC, \electionV)$ if for each alternative $c' \in
\electionC \setminus \{c\}$ it holds that
$
|\{v \in \electionV \mid c \pref_{v} c'\}| > |\{v\in \electionV \mid c' \pref_v c\}|.
$
Condorcet's rule elects the (unique) Condorcet winner if it exists,
and returns an empty set otherwise. A voting rule is
\emph{Condorcet-consistent} if it elects a Condorcet winner when there is one
(however, if there is no Condorcet winner, then a Condorcet-consistent
rule is free to provide any set of winners).

\paragraph{Domain Restrictions.}
Intuitively, an election is \emph{single-peaked}~\cite{Black1948} if it is
possible to order the alternatives on a line in such a way that for
each voter~$v$ the following holds: If $c$ is $v$'s most preferred
alternative, then for each two alternatives $c_i$ and $c_j$ that both are
on the same side of $c$ (with respect to the ordering of the
alternatives on the line), among $c_i$ and $c_j$, $v$ prefers the one
closer to $c$. For example, single-peaked elections arise when we view
the alternatives on the standard political left-right spectrum and
voters form their preferences based solely on alternatives' positions
on this spectrum. 
Formally, we have the following definition.
\begin{definition}[Single-peaked elections]
  Let $C$ be a set of alternative and let $L$ be a linear order over
  $C$ (referred to as the societal axis). We say that a preference
  order $\pref$ (over $C$) is \emph{single-peaked} with respect to $L$ if for
  each three alternative $x,y,z \in C$ it holds that:
  \[
  \left( (x \mathrel{L} y \mathrel{L} z) \lor (z \mathrel{L} y \mathrel{L} x) \right)
  \implies \left( (x \pref y) \implies (y \pref z) \right).
  \]
  An election $(C,V)$ is single-peaked with respect to $L$ if the preference order of each
  voter in $V$ is single-peaked with respect to $L$. An election is
  single-peaked if there is a societal axis with respect to which it
  is single-peaked.
\end{definition}
There are polynomial-time algorithms that given an election decide if
it is single-peaked and, if so, provide a societal axis for
it~\cite{BT86,ELO08}.
\emph{Single-crossing} elections, introduced by \citet{Roberts1977}, capture
a similar idea as single-peaked ones, but from a different
perspective. This time we assume that it is possible to order the
voters so that for each two alternatives $a$ and~$b$ either all voters
rank $a$ and $b$ identically, or there is a single point along this
order where voters switch from preferring one of the alternatives to
preferring the other one. Formally, we have the following definition.

\begin{definition}[Single-crossing elections]
  An election $E = (C,V)$ is \emph{single-crossing} if there is
  an order $L$ over $V$ such that for each two alternatives $x$ and
  $y$ and each three voters $v_1, v_2, v_3$ such that $v_1 \mathrel{L}
  v_2 \mathrel{L} v_3$ it holds that:
  \[ (x \pref_{v_1} y \land x \pref_{v_3} y) \implies x \pref_{v_2}
  y.\]
\end{definition}
As for the case of single-peakedness, 
there are polynomial-time
algorithms that decide if an election is single-crossing and,
if so,
produce the voter order witnessing this fact~\cite{EFS12,BCW13}.

\paragraph{Combinatorial Bundling Functions.}
Given a voter set~$X$, a combinatorial \assignment
$\combRule: X\to 2^{X}$ (abbreviated as \emph{\assignmentnoxspace}) is a
function assigning to each voter a subset of voters. For convenience,
for each subset $X' \subseteq X$, we let $\combRule(X') = \bigcup_{x
  \in X'}\combRule(x)$.  For $x \in X$, $\combRule(x)$ is called $x$'s
\emph{bundle} (and for this bundle, $x$ is called its \emph{leader}).
We assume that $x \in \combRule(x)$ and so $\combRule(x)$ is never
empty. We typically write $b$ to denote the maximum bundle size under
a given~$\combRule$ (which will always be clear from context).
Intuitively, we use combinatorial \assignments to describe the sets of
voters that we can add to an election at a unit cost. For example, one
can think of $\combRule(x)$ as the group of voters that join the election under $x$'s influence.
We represent \assignments explicitly: For each voter $x$ we list
the voters in~$\combRule(x)$.

We are interested in various special cases of \assignments.
We say that $\combRule$ is \emph{\leaderanonymous} if for
each two voters $x$ and $y$ with the same preference order $\combRule(x) = \combRule(y)$ holds.  
Furthermore, $\combRule$ is \emph{\followeranonymous} if for each two voters $x$ and $y$
with the same preference orders, and each voter~$z$,
it holds that $x \in \combRule(z)$ if and only if $y \in \combRule(z)$.
We call $\combRule$ \emph{anonymous} if it is both \leaderanonymous
and \followeranonymous. 
One possible way of thinking about an \anonassignment is that 
it is a function assigning to each preference order appearing in the input 
a subset of the preference orders appearing in the input. 
For example, \anonassignments naturally model scenarios such as airing TV advertisements that appeal to particular groups of voters.

The swap distance between two voters $v_i$ and $v_j$ is the minimum number
of swaps of consecutive alternatives that transform $v_i$'s preference
order into that of~$v_j$.
Given a number~$d\in \mathbb{N}$, we
call $\combRule$ a \emph{\fullassignment} if for each $x\in X$,
$\combRule(x)$ is exactly the set of all $y\in X$ such that the swap
distance between the preference orders of $x$ and $y$ is at most~$d$.

  We introduce the concept of a \emph{\bundlinggraph} of an election,
  which, roughly speaking, models how the bundles of two voters interact with each other.

  \begin{definition}[\Bundlinggraph{s}]\label{def:bundlinggraph}
    Given an input instance to \probCCCAV,
    the \emph{\bundlinggraph} is a simple and directed graph $G = (V(G), E(G))$.
    For each voter $x$ there is a vertex $u_x \in V(G)$,
    and for each two distinct voters $y$ and $z$ such that $y \in \combRule(z)$ there is an arc $(u_z \to u_y) \in E(G)$.
  \end{definition}

  For arbitrary \assignments, the \bundlinggraph is a directed graph.
  However, if \combRule is a \fullassignment, that is, for each voter~$v$,
  $\combRule(v)$ contains all the voters at swap distance~$d$,
  then the \bundlinggraph can be thought of as being undirected,
  due to the following.
    
  \begin{lemma}\label{lem:x_in_y_y_in_x}
    If \combRule is a \fullassignment,
    then for any unregistered voter $x$ and any $y \in \combRule(x)$,
    it holds that $x \in \combRule(y)$.
  \end{lemma}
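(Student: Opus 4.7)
The plan is to derive the claim directly from the symmetry of the swap distance, together with an unpacking of the definition of a \fullassignment. Since \combRule is a \fullassignment, by definition $y \in \combRule(x)$ if and only if the swap distance between $\pref_x$ and $\pref_y$ is at most~$d$, and the same definition applies symmetrically to $\combRule(y)$. Thus the entire claim reduces to showing that the swap distance is a symmetric function on preference orders.

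First I would note that $\combRule$ is a function on the voter set under consideration (here, the set of unregistered voters containing $x$), so $\combRule(y)$ is itself defined; this is what makes the statement meaningful. Then I would recall the definition of swap distance: the minimum number of swaps of consecutive alternatives needed to transform $\pref_x$ into $\pref_y$. The key elementary observation is that each such swap is an involution on preference orders—swapping two adjacent alternatives $a$ and $b$ is undone by swapping them again. Hence, if $\sigma_1, \sigma_2, \ldots, \sigma_k$ is a minimum-length sequence of adjacent swaps taking $\pref_x$ to $\pref_y$, then $\sigma_k, \sigma_{k-1}, \ldots, \sigma_1$ is a sequence of $k$ adjacent swaps taking $\pref_y$ back to $\pref_x$. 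Therefore, the swap distance from $y$ to $x$ is at most the swap distance from $x$ to $y$, and by symmetry the two are equal.

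Putting it together, if $y \in \combRule(x)$, then the swap distance between $\pref_x$ and $\pref_y$ is at most $d$; by the argument above, the swap distance between $\pref_y$ and $\pref_x$ is also at most $d$; and so, invoking the definition of a \fullassignment a second time (this time applied to $y$), we conclude $x \in \combRule(y)$, as required.

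The proof is essentially a definitional unpacking; there is no genuine obstacle. The only point that warrants a sentence is the symmetry of the swap distance, and even this is immediate from the involutive nature of an adjacent transposition. I therefore expect the finished proof to be only a few lines long.
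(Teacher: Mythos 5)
Your proof is correct and follows essentially the same route as the paper: both reduce the claim to the symmetry of the swap distance together with the definition of a \fullassignment. The only difference is that you explicitly justify the symmetry via the involutive nature of adjacent transpositions, whereas the paper takes it for granted; this is a harmless (and arguably welcome) extra sentence.
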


  \begin{proof}
    To see why the statement holds, notice that for any two voters $x$ and $y$,
    if $y \in \combRule(x)$, then the \swapdist between $x$ and $y$ is at most $d$,  therefore, because $\combRule$ is a \fullassignment, $x$ must be in $\combRule(y)$.
    This implies that for any arc $(u_x \to u_y)$ in the \bundlinggraph,
    the corresponding arc $(u_y \to u_x)$ is also present in the \bundlinggraph,
    therefore,
    we can treat the \bundlinggraph as an undirected graph.
  \end{proof}

Notice that this is not always the case for an arbitrary \assignment.
For instance, $\combRule(x) = \{x, y\}$, $\combRule(y) = \{y\}$ is a valid possibility for a bundling function.

\paragraph{Central Problem.}
We consider the following problem for a given voting rule~$\calR$:
\begin{quote}
  $\calR$ \probCCCAVLong \\ ($\calR$-\probCCCAV) \\
  \textbf{Input:} An election~$E=(\electionC, \electionV)$,
  a set~$\electionW$ of (unregistered) voters with $\electionV\cap \electionW = \emptyset$,
  a \assignment $\combRule: \electionW \to 2^{\electionW}$, a
  preferred alternative~$p~\in~\electionC$,
  and a bound~$k \in \mathbb{N}$.\\
  \textbf{Question:} Is there a subset of voters~$\electionW'\subseteq
  \electionW$ of size at most $k$ such that $p \in \calR(\electionC, \electionV \cup \combRule(\electionW'))$,
  where $\calR(C,X)$ is the set of winners of the election $(C, X)$ under the rule $\calR$ ?
\end{quote}
We note that we use here a so-called nonunique-winner model. For a
control action to be successful, it suffices for $p$ to be one of the
tied winners. 
Throughout this work, we refer to the set $\electionW'$ of voters such that $p$ wins election $(\electionC, \electionV \cup \combRule(\electionW'))$ as the solution
and denote $k$ as the solution size. 


$\calR$-\probCCCAV is a generalization of the well-studied
problem~$\calR$ \probCCAVLong~($\calR$-\probCCAV) (in which
$\combRule$ is fixed so that for each $w \in W$ we have $\combRule(w)
= \{w\}$).  The non-combinatorial problem \probCCAV is polynomial-time
solvable for the Plurality rule~\cite{BTT92}, but is $\np$-complete for the
Condorcet rule~\cite{LFZL09}, therefore:

\begin{observation}\label{obs:condorcetIsHard}
  \textsc{Condorcet}-\probCCCAV is $\np$-hard even 
  if the maximum bundle size~$b$ is one.
\end{observation}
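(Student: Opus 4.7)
The plan is to observe that the observation is essentially immediate from the definitions and the cited prior hardness result. When the maximum bundle size $b$ equals one, the constraint $w \in \combRule(w)$ forces $\combRule(w) = \{w\}$ for every unregistered voter~$w \in \electionW$. Under this restriction, for any subset $\electionW' \subseteq \electionW$ we have $\combRule(\electionW') = \electionW'$, so the resulting election $(\electionC, \electionV \cup \combRule(\electionW'))$ coincides with $(\electionC, \electionV \cup \electionW')$.

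Therefore I would argue that the identity map on instances is a (trivial) polynomial-time reduction from \textsc{Condorcet}-\probCCAV to the restriction of \textsc{Condorcet}-\probCCCAV to instances with $b=1$: given a \textsc{Condorcet}-\probCCAV instance $((\electionC, \electionV), \electionW, p, k)$, output the \textsc{Condorcet}-\probCCCAV instance with the same $\electionC$, $\electionV$, $\electionW$, $p$, $k$ and with $\combRule$ defined by $\combRule(w) := \{w\}$ for each $w \in \electionW$. A set $\electionW'$ of size at most $k$ with $p \in \calR(\electionC, \electionV \cup \electionW')$ exists in the original instance if and only if the same set witnesses a yes-answer in the constructed instance.

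Since \textsc{Condorcet}-\probCCAV is known to be $\np$-hard by \citet{LFZL09}, the reduction immediately yields $\np$-hardness of \textsc{Condorcet}-\probCCCAV restricted to instances with maximum bundle size one. There is no real obstacle here; the only thing to double-check is that the definition of the combinatorial problem literally generalizes the non-combinatorial one (which it does, since the requirement $w \in \combRule(w)$ makes $\combRule(w) = \{w\}$ the unique singleton bundle containing~$w$) and that the winner model (nonunique-winner) matches the model used in the cited hardness proof; if not, one can note that for the Condorcet rule the unique-winner and nonunique-winner variants coincide whenever a Condorcet winner exists, so the reduction carries through unchanged.
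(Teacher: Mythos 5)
Your proposal is correct and matches the paper's own justification: the paper notes immediately before the observation that \probCCCAV with $\combRule(w)=\{w\}$ for all $w$ is exactly the non-combinatorial \probCCAV, which is $\np$-complete for the Condorcet rule by \citet{LFZL09}. Your additional remarks on the forced singleton bundles and the winner model are fine but not needed beyond what the paper already states.
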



\paragraph{Parameterized Complexity.}
An instance~$(I,k)$ of a parameterized problem consists of the actual
instance~$I$ and an integer~$k$ being the \emph{parameter}~\cite{DF13,FG06,Nie06}.
A parameterized problem is called \emph{fixed-parameter tractable} (is in $\fpt$)
if there is an algorithm solving it in~$f(k)\cdot|I|^{O(1)}$ time,
for an arbitrary computable function $f$ only depending on parameter~$k$,
whereas an algorithm with running-time~$|I|^{f(k)}$ only shows membership in the class~$\xp$ (clearly, $\fpt \subseteq \xp$).
If a parameterized problem is fixed-parameter tractable due to a formulation as integer linear program (ILP), then we say that this problem is in $\ilpfpt$.
One can show that a parameterized problem~$L$ is (presumably) not
fixed-parameter tractable by devising a \emph{parameterized reduction} from a $\wone$-hard or a $\wtwo$-hard problem (such as \probClique or \probSetCover parameterized by the ``solution size'') 
to~$L$.
A parameterized reduction from a parameterized problem~$L$ to another parameterized problem~$L'$
is a function that,
given an instance~$(I,k)$,
computes in~$f(k)\cdot |I|^{O(1)}$ time an instance~$(I',k')$,
such that~$k' \le g (k)$
and~$(I,k)\in L \Leftrightarrow (I',k')\in L'$.
\citet{BBCN12} survey 
parameterized complexity investigations in voting.
\begin{table}[t!]

  \centering
   \begin{tabular}[t!]{|l|c|c|c|c|c|}
    \hline 
    & $m$ & $n$ & $k$ & $b$ & $d$ \\
    \hline \hline
    & \multicolumn{5}{c|}{} \\[-1em]
    \multirow{2}{*}{\# alternatives ($m$)} & 
    \multicolumn{5}{c|}{Non-anonymous: $\wtwo$-h wrt.~$k$ even if $m=2$~\mytabref{thm:CCCAVw2-k-m=2}}\\
    &       \multicolumn{5}{c|}{Anonymous: $\ilpfpt$ wrt.~$m$~\mytabref{thm:CCCAVisFPTm}\phantom{Non---}} \\
    \hline
    &\cellcolor{lightgray}& \multicolumn{4}{c|}{} \\[-1em]
    \# unreg.\ voters ($n$) & \cellcolor{lightgray}& \multicolumn{4}{c|}{$\fpt$ wrt.~$n$} \\
    \cline{1-1}\cline{2-6}
    & \multicolumn{2}{c|}{\cellcolor{lightgray}} & \multicolumn{2}{c|}{} & \\[-1em]
    \multirow{3}{*}{solution size ($k$)} & \multicolumn{2}{c|}{\cellcolor{lightgray}} & \multicolumn{2}{c|}{$\xp$ \mytabobs{obs:CCCAVinXPk}} &  \multirow{3}{4cm}{Single-peaked \& full-$1$ $\combRule$: $\wone$-h wrt.~$k$~\mytabref{thm:SinglepeakedWhardness}}\\
                                         & \multicolumn{2}{c|}{\cellcolor{lightgray}} & \multicolumn{2}{c|}{Anonymous \& $b=3$:} & \\
                                         & \multicolumn{2}{c|}{\cellcolor{lightgray}} & \multicolumn{2}{c|}{$\wone$-h wrt.~$k$~\mytabref{thm:CCCAVw1k-anonymous-b=3}} &   \\
    \cline{1-1}\cline{4-6}
     & \multicolumn{3}{c|}{\cellcolor{lightgray}} & \multicolumn{2}{c|}{} \\[-1em]
     max. bundle size ($b$) & \multicolumn{3}{c|}{\cellcolor{lightgray}} & 
     \multicolumn{2}{c|}{\parbox[l]{8cm}{
                                         $b=2:$  $\np$-h~\mytabref{thm:np-h-b=2} and $\p$ for full-$d$ $\combRule$~\mytabref{obs:b2fullp}\\
                                         $b=3:$ $\np$-h even for full-$d$ $\combRule$ \mytabref{thm:fullB3HARD}\\
                                         $b\ge 4:$ $\np$-h even for full-$1$ $\combRule$~\mytabref{prop:np-hard:d=1:b=4}}}\\
    \cline{1-1}\cline{5-6}
    \multirow{2}{*}{max.\ swap dist.\ ($d$)} & \multicolumn{4}{c|}{\cellcolor{lightgray}} & $d=1$: $\wone$-h wrt.~$k$~\mytabref{thm:SinglepeakedWhardness}\\
     & \multicolumn{4}{c|}{\cellcolor{lightgray}} & Single-crossing \& full-$d$ $\combRule$: $\p$ \mytabref{thm:cccav-P-sc}\\
    \hline
    \multicolumn{6}{c}{}\\[-1em]
  \end{tabular}

  \caption{
    \textbf{Computational complexity classification of \textsc{Plurality}-\probCCCAV}
    (since the non-combinatorial problem \probCCAV is already $\np$-hard for Condorcet's rule, we concentrate here on the Plurality rule).
    Each row and column in the table corresponds to a parameter 
    such that each cell contains results for the two corresponding parameters combined.
    Due to symmetry, 
    there is no need to consider the cells under the main diagonal,
    therefore they are painted in gray. 
    $\ilpfpt$ means $\fpt$ based on a formulation as an integer linear program.
}
  \label{tab:result}
\end{table}

\paragraph{Our Contributions.}
We introduce a new model for combinatorial control in voting.  
As $\calR$-\probCCCAV is generally $\np$-hard even for $\calR$ being the Plurality rule, we show several fixed-parameter
tractability results for some of the natural parameterizations of
$\calR$-\probCCCAV; we almost completely resolve the complexity of \probCCCAV,
for the Plurality rule and the Condorcet rule,
as a function of the maximum bundle size $b$ and the maximum distance
$d$ from a voter~$v$ to the farthest element of her bundle.
Further, we show that the problem remains hard even when restricting
the elections to be single-peaked, but that it is polynomial-time
solvable when we focus on single-crossing elections. Our results for
Plurality elections are summarized in \cref{tab:result}.

\section{Complexity for Unrestricted Elections}
\label{sec:unrestricted elections}
In this section we provide our results for the case of unrestricted
elections, where voters may have arbitrary preference orders. In the
next section we will consider single-peaked and single-crossing
elections that only allow ``reasonable'' preference orders.

\subsection{Number of Voters, Number of Alternatives, and Solution Size}
We start our discussion by considering parameters
``the number $m$ of alternatives'',
``the number $n$ of unregistered voters'',
and ``the solution size $k$''. 
A simple brute-force algorithm, checking all possible combinations of $k$ bundles,
proves that both \textsc{Plurality}-\probCCCAV and
\textsc{Condorcet}-\probCCCAV are in $\xp$ for parameter~$k$, and
in $\fpt$ for parameter $n$ (the latter holds because $k \leq n$). Indeed, the same result holds for all voting rules that
are $\xp/\fpt$-time computable for the respective parameters.

\begin{observation}\label{obs:CCCAVinXPk}
  \obsCCCAVinXPk
\end{observation}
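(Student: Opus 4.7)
The plan is a direct brute-force enumeration over all candidate solution sets. Since a solution (if one exists) consists of at most $k$ voters from $\electionW$ and $|\electionW| \le n$, I would iterate over all subsets $\electionW' \subseteq \electionW$ with $|\electionW'| \le k$. The standard binomial bound $\sum_{i=0}^{k} \binom{n}{i} = O(n^{k})$ gives $O(n^{k})$ such candidate subsets, which I can enumerate, for instance, by a simple depth-$k$ recursion that picks one additional voter of $\electionW$ at each level.

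For each candidate $\electionW'$, I would first compute the combined bundle $\combRule(\electionW') = \bigcup_{w \in \electionW'} \combRule(w)$ by traversing the at most $k \le n$ explicitly stored bundles (each of size at most $n$), construct the modified election $(\electionC, \electionV \cup \combRule(\electionW'))$, and then invoke the winner-determination subroutine for the current rule, which by assumption runs in $\mathrm{winner}$ time. Finally, I would check whether $p$ lies in the returned winner set; if it does for some $\electionW'$, I output yes, otherwise no.

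Multiplying the per-candidate cost by the number of candidates yields the claimed $O(n^{k} \cdot n \cdot m \cdot \mathrm{winner})$ running time, where the extra factor $n \cdot m$ conservatively absorbs the cost of bundle union and election assembly (each of the at most $n$ preference orders has length $m$). Correctness is immediate from the problem definition: the algorithm explicitly tests every subset $\electionW' \subseteq \electionW$ of size at most $k$, so it reports yes exactly when the definition of \probCCCAV is satisfied.

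Since the argument is pure brute force, there is no genuine obstacle. The only subtle point is to ensure the polynomial factors outside $n^{k}$ are accounted for correctly; in particular, one must note that although a bundle union may involve up to $k$ bundles of size up to $n$, the resulting voter set still has size $O(n)$, so the winner-determination step operates on an election of size $O(n \cdot m)$ and is charged exactly once per iteration.
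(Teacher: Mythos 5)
Your proposal is correct and matches the paper's justification exactly: the paper proves this observation by the same brute-force enumeration of all $O(n^{k})$ subsets of at most $k$ bundle leaders, followed by assembling the resulting election and running winner determination for each. No further comment is needed.
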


The $\xp$ result for \textsc{Plurality}-\probCCCAV
with respect to the parameter $k$ probably cannot be improved to
fixed-parameter tractability. 
Indeed, for parameter $k$ we show that
the problem is $\wone$-hard, even for \anonassignments and for maximum bundle size three.

\begin{theorem}\label{thm:CCCAVw1k-anonymous-b=3}
  \thmwonekanonbthree
\end{theorem}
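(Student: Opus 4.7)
The plan is to give a parameterized reduction from \probClique (parameterized by the clique size), which is both $\wone$-hard and $\np$-hard, and so obtain both hardness statements at once. The guiding intuition is: identify the $k$ vertices of a would-be clique with vertex-voters that are pulled into bundles as followers, and identify the $\binom{k}{2}$ edges with edge-voters that must be explicitly selected as leaders. The solution bound is set to $k' := \binom{k}{2}$, a function of $k$ alone, which makes the map a parameterized reduction.

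More concretely, given an instance $(G,k)$ of \probClique with $G=(V_G,E_G)$, I construct a \textsc{Plurality}-\probCCCAV instance whose alternatives are $p$, two threat alternatives $q$ and $r$, and pairwise distinct auxiliaries $c_v$ (one per $v \in V_G$) and $c_e$ (one per $e \in E_G$); the only role of the $c_v$'s and $c_e$'s is to make every voter's preference order unique, so that $\combRule$ is automatically anonymous. The registered set $\electionV$ contains $\binom{k}{2}$ voters ranking $r$ first and $\binom{k}{2} - k$ voters ranking $q$ first. The unregistered set $\electionW$ contains, for each $v \in V_G$, a \emph{vertex voter} $w_v$ with preferences $q \succ c_v \succ p \succ \cdots$ and $\combRule(w_v) = \{w_v\}$; and, for each $e = \{u,v\} \in E_G$, an \emph{edge voter} $w_e$ with preferences $p \succ c_e \succ \cdots$ and $\combRule(w_e) = \{w_e, w_u, w_v\}$. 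Every bundle has size at most $3$, the function $\combRule$ is trivially anonymous, and the parameter $k' = \binom{k}{2}$ depends only on $k$, so the construction meets all formal requirements.

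For correctness, for any $\electionW' \subseteq \electionW$ with $|\electionW'| \le k'$, write $a$ for the number of edge voters in $\electionW'$ and $b'$ for the number of distinct vertex voters appearing in $\combRule(\electionW')$ (counting both vertex leaders chosen in $\electionW'$ and the endpoints of chosen edges). After control the new plurality scores are $a$ for $p$, unchanged $\binom{k}{2}$ for $r$, and $\binom{k}{2} - k + b'$ for $q$; alternatives $c_v$ and $c_e$ receive no first-place votes. Hence $p$ is a (tied) winner iff $a \ge \binom{k}{2}$ and $b' \le k$, which together with $|\electionW'| \le \binom{k}{2}$ forces all selected voters to be edge voters and the corresponding edges to use at most $k$ endpoints, i.e., to form a $k$-clique in $G$. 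Conversely, the edge voters of any $k$-clique yield $a = \binom{k}{2}$ and $b' = k$, giving a valid solution. The main obstacle is precisely this backward direction: ruling out every ``mixed'' or ``non-clique'' selection. It reduces cleanly to the elementary combinatorial observation that $\binom{k}{2}$ edges supported on at most $k$ vertices must form exactly $K_k$, combined with the fact that any vertex leader $w_v$ placed in $\electionW'$ only eats into the edge budget while feeding $q$'s score, so such leaders are never useful. Tiny cases $k \le 2$ are handled by inspecting $G$ directly.
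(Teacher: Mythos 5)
Your proposal is correct and follows essentially the same route as the paper: a parameterized reduction from \probClique with edge voters as $p$-voters whose bundles drag in the two endpoint vertex voters (which vote for a ``counting'' alternative), a budget of $\binom{k}{2}$, and score thresholds forcing $\binom{k}{2}$ selected edges spanning at most $k$ vertices, hence a clique. The only cosmetic differences are your use of dedicated alternatives $c_v$ to force distinct preference orders (the paper instead assumes $G$ connected with minimum degree $h-1$) and a slightly different initial-score bookkeeping.
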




\begin{proof}
  We provide a parameterized reduction from the $\wone$-hard problem \probClique parameterized by the parameter~$h$~\cite{DF13},
  which asks for the existence of a size-$h$ clique in an input graph~$G$.

  \probDef {\probClique} {An undirected graph $G=(V(G), E(G))$ and $h
    \in \mathbb{N}$.}  {Does $G$ admit a size-$h$ \emph{clique}, that
    is, a size-$h$ vertex subset~$U\subseteq V(G)$ such that $G[U]$ is
    complete?}
  
  Let $(G, h)$ be a \probClique instance. Without loss of generality,
  we assume that $G$ is connected, that $h \geq 3$, and that each vertex in $G$ has degree at
  least $h-1$.  We construct an election~$E=(\electionC, \electionV)$
  with $\electionC:=\{p, w, g\} \cup \{c_e \mid e\in E(G)\}$.  The
  registered voter set~$\electionV$ consists of $\binom{h}{2} + h$ voters each with preference order~$w \pref \seq{C\setminus \{w\}}$,
  another $\binom{h}{2}$ voters each with preference order~$g \pref \seq{C \setminus \{g\}}$, and
  another $h$ voters each with preference order~$p \pref \seq{C\setminus \{p\}}$.
  For each vertex $u \in V(G)$, we define $C(u):= \{c_e \mid e \in E(G) \wedge u \in e\}$,
  and construct the set~$\electionW$ of unregistered voters as follows:
  \begin{enumerate}[(i)]
  \item For each vertex $u \in V(G)$, we add an unregistered \votertype{g}~$w_{u}$ with preference
    order~$g \pref \seq{C(u)} \pref \seq{C \setminus
      (\{g\} \cup C(u))}$, and we set $\combRule(w_u) = \{w_u\}$.
    We call these unregistered voters \emph{vertex voters}.    
  \item For each edge~$e = \{u,u'\} \in E(G)$, 
    we add an unregistered \votertype{p}~$w_e$ with preference order~$p \pref c_e \pref \seq{C \setminus \{p, c_e\}}$,
    and we set $\combRule(w_e) = \{w_u, w_{u'}, w_e \}$.
    We call these unregistered voters \emph{edge voters}.
  \end{enumerate}
  Since all the unregistered voters have different preference orders (this
  is so because $G$ is connected, $h \geq 3$, and each vertex has degree at least $h-1$),
  every \assignment for our instance, $\combRule$ included, is anonymous.
  To finalize our construction, we set $k := \binom{h}{2}$. 

  We show that $G$ has a size-$h$ clique if and only if $(E =
  (\electionC, \electionV), \electionW, \combRule, p, k)$ is a yes instance
  for \textsc{Plurality}-\probCCCAV.
  For the ``if'' part, suppose that there is a subset~$W'$ of at most
  $k$ voters such that $p$ wins the election~$(\electionC, \electionV \cup \combRule(\electionW'))$.  
  We show that the vertex set~$U':=\{u\in V(G)
  \mid w_e \in W' \land u \in e\}$ is a size-$h$ clique for $G$.
  First, we observe that $p$ needs at least $\binom{h}{2}$~points to
  become a winner because of the difference in scores between the
  initial winner $w$ and $p$.  By our construction, only bundles that
  include the edge voters give points to $p$ and each of such bundles gives $p$ exactly one point. 
  Since we can add at most
  $k = \binom{h}{2}$ bundles, we must add exactly $k$ bundles of the
  edge voters.  This means that $E(G[U'])$ contains at least $k$
  edges.
  However, in order to ensure $p$'s victory, $\combRule(W')$ may only
  give at most $h$ additional points to $g$.  This means that $U'$
  contains at most $h$ vertices. 
  With $|E(G[U'])|\ge k$, we conclude that
  $U'$ is of size $h$ and, hence, is a size-$h$ clique for $G$.

  For the ``only if'' part, suppose that $U'\subseteq V(G)$ is a size-$h$ clique for $G$. 
  We construct the subset~$W'$ by adding to it any edge voter~$w_e$ with $e \in E(G[U'])$. 
  Obviously, $|W'| = k$.
  Now it easy to check that $p$ co-wins with both $w$ and $g$ the election $(\electionC, \electionV \cup \combRule(W'))$ with score~$\binom{h}{2}+h+1$.
\end{proof}

If we drop the anonymity requirement for the \assignment,
then we obtain a stronger intractability result. For parameter $k$,
the problem becomes $\wtwo$-hard, even for two alternatives.
This is quite remarkable because typically election problems with a
small number of alternatives are easy (they can be solved either through
brute-force attacks or through integer linear programming attacks
employing the famous $\fpt$ algorithm of Lenstra~\cite{Len83}; 
see 
the survey of \citet{BBCN12} for examples, but
note 
that there are also known examples of problems where a small
number of alternatives does not seem to help~\cite{BCFNN14}). Further,
since our proof uses only two alternatives, it applies to almost all
natural voting rules: For two alternatives almost all of them (including
the Condorcet rule) are equivalent to the Plurality rule.
The reduction is from the $\wtwo$-complete problem~\probSetCover parameterized by the solution size~\cite{DF13}.

\begin{theorem}\label{thm:CCCAVw2-k-m=2}
  \thmwtwokmtwo
\end{theorem}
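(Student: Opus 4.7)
The plan is to reduce from \probSetCover parameterized by the solution size~$k$, which is $\wtwo$-hard, using only two alternatives. Since with only two alternatives the Condorcet rule collapses to strict majority, the same construction (with a one-unit adjustment of the initial vote totals) simultaneously handles both \textsc{Plurality}-\probCCCAV and \textsc{Condorcet}-\probCCCAV.

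Given a \probSetCover instance $(U, \mathcal{F}, k)$, where we may assume $|\mathcal{F}| \geq k$ (otherwise the problem is trivial), I build an election with $\electionC = \{p, q\}$ as follows. For each element $u \in U$, introduce an ``element voter''~$e_u \in \electionW$ of type~$p$ with trivial bundle $\combRule(e_u) = \{e_u\}$. For each set $S \in \mathcal{F}$, introduce a ``set leader''~$\ell_S \in \electionW$ of type~$p$ with bundle $\combRule(\ell_S) = \{\ell_S\} \cup \{e_u : u \in S\}$. The registered voter set~$\electionV$ consists of exactly $k + |U|$ \voterstype{q} (for the Condorcet case: $k + |U| - 1$ \voterstype{q}) and no \voterstype{p}. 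The solution-size bound stays at~$k$.

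Correctness hinges on a tight accounting. If we select $W' \subseteq \electionW$ with $|W'| \leq k$ consisting of $j$ set leaders and $e$ element voters (so $j + e \leq k$), then $\combRule(W')$ contributes $j + |T'|$ new \voterstype{p} and no new \voterstype{q}, where $T' \subseteq U$ collects both the elements covered by the chosen sets and the elements directly chosen via their element voters. Because $|T'| \leq |U|$ and $j \leq k$, we have $j + |T'| \leq k + |U|$, with equality exactly when $j = k$, $e = 0$, and $\bigcup_{\ell_S \in W'} S = U$. Hence $p$ becomes a Plurality (co-)winner iff we select $k$ set leaders whose sets cover~$U$, and the Condorcet case is identical thanks to the $-1$ adjustment. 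Any set cover of size $k' \leq k$ yields a solution by padding it with $k - k'$ arbitrary leaders, which is possible since $|\mathcal{F}| \geq k$.

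The only subtlety, and the main obstacle to address carefully, is ruling out parasitic selections---picking element voters directly, picking fewer than $k$ leaders, or mixing the two---that could otherwise let $p$ win without a genuine cover. This is exactly what the tight score gap of $k + |U|$ (resp.\ $k + |U| - 1$) enforces: every slot not spent on a set leader, and every set-leader slot spent on already-covered elements, strictly loses at least one unit of $p$'s attainable gain, which equals the slack. The construction has polynomial size and preserves the parameter~$k$ verbatim, so this is a parameterized reduction, establishing $\wtwo$-hardness for both voting rules even when $|\electionC| = 2$.
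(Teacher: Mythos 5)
Your reduction is correct and starts from the same place as the paper's: a parameterized reduction from \probSetCover with two alternatives, where each set's voter carries the element voters of that set in its bundle, and the Condorcet case is handled by shifting the registered totals by one. Where you genuinely differ is in the mechanism that rules out parasitic selections. The paper makes the set voters \voterstype{g} (so each chosen set also costs $p$ one point of slack), attaches two dummy \voterstype{g} to every element voter's bundle so that picking element voters directly is strictly harmful, and starts $g$ at $n-k$ points; with that accounting a winning selection need not itself be a cover of size $k$ (choosing $j<k$ sets may leave up to $k-j$ elements uncovered), and one still has to extend it to a cover using the leftover budget. You instead make the set leaders themselves \voterstype{p}, give element voters trivial singleton bundles, and calibrate the registered score to exactly $k+|U|$, so the bound $j+|T'|\le k+|U|$ is tight precisely when $k$ leaders are chosen and their sets cover $U$; a single inequality replaces both the dummy-voter gadget and the exchange/extension argument, at the mild cost of the padding assumption $|\mathcal{F}|\ge k$. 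Both constructions are necessarily non-anonymous --- all unregistered \voterstype{p} share one preference order over two alternatives yet must carry different bundles --- which is consistent with the paper's remark that anonymity would defeat this proof. Your version is a clean, slightly leaner alternative to the paper's.
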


\begin{proof}
    We provide a parameterized reduction from the $\wtwo$-complete problem~\probSetCover parameterized by the parameter~$h$~\cite{DF13}.
  \probDef
    {\probSetCover}
    {A collection~$\mathcal{S} = \{S_1,\ldots,S_m\}$ of subsets of the universe~$X = \{x_1,\ldots,x_n\}$ and $h \in \mathbb{N}$.}
    {Is there a size-$h$ subset $\mathcal{S'} \subseteq \mathcal{S}$ that covers the universe, that is, $\bigcup \mathcal{S'} = X$?}

  Let $(S,X,h)$ be a \probSetCover instance.
  We construct an election $E = (C, V)$ with $C = \{p, g\}$.
  The registered voter set $V$ consists of only $(n - k)$ \voterstype{g}.
  We construct the unregistered voter set $W$ as follows:
    
  \newcommand{\setcoverelementvoter}{\emph{element-voter}\xspace}
  \newcommand{\setcoverelementvoters}{\emph{element-voters}\xspace}
  \newcommand{\setcoversetvoter}{\emph{set-voter}\xspace}
  \newcommand{\setcoversetvoters}{\emph{set-voters}\xspace}
  \newcommand{\setcoverdummyvoter}{\emph{dummy-voter}\xspace}
  \newcommand{\setcoverdummyvoters}{\emph{dummy-voters}\xspace}
    
  \begin{enumerate}[(i)]
    \item For each element $x_i\in X$, we construct one \votertype{p}, denoted by $w^{x}_{i}$
    (called \setcoverelementvoter),
    and two \voterstype{g}, denoted by $w^{d}_{i^1}$ and $w^{d}_{i^2}$
    (called \setcoverdummyvoters),
    and we set $\combRule(w^{x}_{i}) = \{w^{x}_{i}, w^{d}_{i^1}, w^{d}_{i^2}\}$ and 
    set $\combRule(w^{d}_{i^1}) = \combRule(w^{d}_{i^2}) = \{w^{d}_{i^1}, w^{d}_{i^2}\}$.
    \item For each set~$S_j$, we construct one \votertype{g}, denoted by $w^{S}_{_j}$
    (called \setcoversetvoter),
    and we set $\combRule(w^{S}_{j}) = \{w^{S}_{j}\} \cup \{w^{x}_{i} \mid x_i \in S_j\}$.
    That is, the bundle for the voter corresponding to a set contains all of the voters corresponding to the elements of the set.

  \end{enumerate}
  Finally, we set $k := h$ and let $d$ be arbitrary.
  
  The construction is obviously a parameterized reduction,
  and we show now that there is a size-$h$ subset $\mathcal{S}' \subseteq \mathcal{S}$ that covers the universe
  if and only if there is a size-$k$ subset $\electionW'$ of unregistered voters,
  such that if added (with their respective bundles) to the election,
  $p$ becomes a Plurality winner of the election.
  
  For the ``if'' part, suppose that there is such a size-$k$ subset $W'$.
  Also,
  if there are some \setcoverelementvoters in the solution,
  then we can simply remove them,
  as they do not help $p$ win,
  due to the \setcoverdummyvoters.
  The only way to achieve the score increase of $n - k$ for $p$
  is to have all of the \setcoverelementvoters added to the election,
  and this can be done only by covering all of the universe,
  with at most $k$ \setcoversetvoters;
  therefore,
  the solution corresponds to a set covering of the universe.
  
  For the ``only if'' part,
  given a size-$h$ subset $\mathcal{S}' \subseteq \mathcal{S}$ that covers the universe,
  we choose,
  for every $S_j \in S'$,
  its respective voter $w^S_{j}$,
  and add it to the election.
  This gives a size-$k$ subset $\electionW'$ of unregistered voters,
  which easily can be verified to result in $p$ winning the election.
  
  As for the Condorcet rule, 
  we use the same unregistered voters as defined above 
  and we construct the original election with
  ($2n - k - 1$) \voterstype{g} and ($n - k$)~\voterstype{p}.
\end{proof}

The above proof uses the non-anonymity of the \assignment in a crucial way.
If we require the \assignment to be anonymous,
then \probCCCAV can be formulated as an integer linear program
where the number of variables and the number of constraints are bounded by some function in
the number $m$ of alternatives.
The idea behind this is that with anonymity we can formulate our problem through an integer linear program 
where the number of variables and the number of constraints are bounded by some function in $m$.
Such integer linear programs are in $\fpt$ with respect to the number of variables~\cite{Len83}.

\begin{theorem}\label{thm:CCCAVisFPTm}
  \thmfptm
\end{theorem}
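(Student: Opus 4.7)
The plan is to reduce each instance to an integer linear program whose number of variables is bounded by a function of $m$ alone, and then invoke Lenstra's algorithm~\cite{Len83} to obtain an $\fpt$ running time.

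First I would compress the instance using anonymity. With $m$ alternatives there are at most $m!$ distinct preference orders. Let $\Pi$ be the set of preference orders appearing in $\electionV \cup \electionW$, so that $|\Pi| \leq m!$, and for each $\pi \in \Pi$ let $n^{V}_{\pi}$ and $n^{W}_{\pi}$ count the registered and unregistered voters with order $\pi$, respectively. Leader-anonymity together with follower-anonymity implies that there exists a matrix $B \colon \Pi \times \Pi \to \{0,1\}$ with the following meaning: for every unregistered voter $x$ with preference order $\pi$, the bundle $\combRule(x)$ contains \emph{all} unregistered voters of order $\tau$ if $B(\pi,\tau)=1$, and \emph{none} of them otherwise. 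In particular $B(\pi,\pi)=1$, because $x \in \combRule(x)$.

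Next I would build the ILP. Since $\combRule(x)=\combRule(x')$ whenever $x,x'$ share a preference order, selecting more than one leader of the same order is never useful, so for each $\pi\in\Pi$ I introduce a binary variable $y_\pi\in\{0,1\}$ recording whether at least one leader of order $\pi$ is chosen, together with an auxiliary binary variable $z_\tau\in\{0,1\}$ for each $\tau\in\Pi$ indicating whether any voter of order $\tau$ is added. The interaction is enforced by $z_\tau \geq y_\pi$ for every pair with $B(\pi,\tau)=1$ and by $z_\tau \leq \sum_{\pi \,:\, B(\pi,\tau)=1} y_\pi$, and the cost bound becomes $\sum_{\pi\in\Pi} y_\pi \leq k$. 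The victory of $p$ is then linearly expressible in the $z_\tau$: for Plurality, for each $c\neq p$ I add
\[
s^{V}_{p} + \sum_{\tau \,:\, \mathrm{top}(\tau)=p} n^{W}_{\tau}\,z_\tau \;\geq\; s^{V}_{c} + \sum_{\tau \,:\, \mathrm{top}(\tau)=c} n^{W}_{\tau}\,z_\tau,
\]
where $s^{V}_{c}$ is the Plurality score of $c$ in $\electionV$ and $\mathrm{top}(\tau)$ is the top alternative of $\tau$; for Condorcet I add, for each $c\neq p$, the analogous constraint requiring the head-to-head margin of $p$ over $c$ in $\electionV\cup\combRule(\electionW')$ to be at least $1$, which is likewise linear in the $z_\tau$.

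The resulting ILP has at most $2|\Pi|\leq 2\cdot m!$ variables and polynomially many constraints, so by Lenstra's algorithm its feasibility can be decided in $f(m)\cdot |I|^{O(1)}$ time, settling both the Plurality and the Condorcet case. The main conceptual step is verifying that the matrix $B$ faithfully captures an anonymous bundling function, and that the $z_\tau$-variables together with the inequalities above correctly express the union semantics $\combRule(\electionW')=\bigcup_{w\in\electionW'}\combRule(w)$; once this is in place the theorem follows directly from Lenstra's result.
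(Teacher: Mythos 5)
Your proposal follows essentially the same route as the paper: use anonymity to collapse the voters into at most $m!$ preference-order types, encode the selection of bundle leaders as an integer linear program with $O(m!)$ variables, and invoke Lenstra's algorithm. Your matrix $B$ is exactly the paper's indicator $M_i^j$ (``type $\tau$ lies in the bundle of type $\pi$''), and your victory constraints play the role of the paper's score-comparison constraint; the justification that one never needs two leaders of the same type is also the paper's. The one genuine refinement is your auxiliary variables $z_\tau$ with the covering inequalities: they make the union semantics of $\combRule(\electionW')$ explicit, so that a follower type contained in several selected bundles contributes its $n^{W}_{\tau}$ voters only once, whereas the paper folds everything into a single sum over leader--follower pairs and leaves this deduplication implicit. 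The only thing actually missing from your ILP is an availability constraint: you let $y_\pi$ range over all orders occurring in $\electionV\cup\electionW$, so the program could select a leader type $\pi$ for which no unregistered voter exists. Adding $y_\pi \le n^{W}_{\pi}$ (the paper's constraint $x_i \le N_i$), or simply restricting the leader variables to orders present in $\electionW$, closes this gap, and with that one line the argument is complete.
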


\begin{proof}
  \newcommand{\numvoters}[1]{
    \ensuremath{N_{#1}}
  }
  \newcommand{\bundle}[2]{
  \ensuremath{M_{#1}^{#2}}}

  \newcommand{\varvoters}[1]{
    \ensuremath{x_{#1}}
  }

  \newcommand{\firstpos}[2]{
    \ensuremath{B_{#1}^{#2}}
  }

  \newcommand{\s}[1]{
    \ensuremath{s(#1)}
  }

  We describe an integer linear program~(ILP) with at most $m!$ variables and at most $m! + m$ constraints that solves both \textsc{Plurality}-\probCCCAV and \textsc{Condorcet}-\probCCCAV. 
  Fixed-parameter tractability then follows,
  since any ILP with $\rho$ variables and $L$ input bits is solvable in $O(\rho^{2.5\rho+o(\rho)}L)$ time~(\cite{Len83} and~\cite{Kan87}).

  With $m$~alternatives, there are at most $m!$ voters with pairwise different preference orders in a given election.
  For each alternative $a\in \electionC$, let $\s{a}$ be its initial score.
  Since the voters are anonymous, there are at most $m!$ different bundles.
  Furthermore, we can assume that all voters in $W'$ have pairwise different preference orders
  (this is because, due to anonymity, there is no additional gain of adding two voters with the same preference order).
  
  Let $\pref_1, \pref_2, \ldots, \pref_{m!}$ be an ordering of all of the possible preference orders over $m$ alternatives.
  For $i \in [m!]$, let $\numvoters{i}$ be the number of voters with preference order~$\pref_i$ in $W$.
  For $i \in [m!]$ and $j \in [m!]$, let $\bundle{i}{j}$ have value $1$ 
  if there is a voter with preference order~$\pref_j$ 
  that is in the bundle of a voter whose preference order is $\pref_i$, and otherwise $0$.
  For each alternative~$a$ and each $i \in [m!]$,
  let $\firstpos{i}{a}=1$ if alternative~$a$ is at the first position in the preference order~$\pref_i$
  (that is, $i$ is a \votertype{a}), and otherwise $\firstpos{i}{a}=0$.
  
  For each preference order~$\pref_i$, $i \in [m!]$, 
  we introduce one boolean variable~$\varvoters{i}$,
  with the intent that the value of $\varvoters{i}$ will be $1$ if and only if $W'$ contains a voter with preference order $i$.
  Indeed,
  an integer linear program usually tries to minimize or maximize a certain function,
  while here,
  we write the integer linear program as simply a feasibility problem.
  It can be easily rewritten with a minimization function instead.
  Now we are ready to state the integer linear program. 


  \begin{flalign}
    \label{ilp:solutionbound}
    \sum_{i\in [m!]}{\varvoters{i}} &\le k, & \\
    \label{ilp:voterbound}
    \varvoters{i} &\le\numvoters{i} & \forall i \in [m!],\quad\quad \\
    \label{ilp:winning}
    \sum_{i \in [m!]}\sum_{j \in [m!]}(\firstpos{j}{a}-\firstpos{j}{p})\cdot 
    \numvoters{j} \cdot \bundle{i}{j} \cdot \varvoters{i} &< s(p) - s(a) & \forall a \in C\setminus \{p\}.\quad\quad
  \end{flalign}
  
  Constraint~(\ref{ilp:solutionbound}) ensures that at most $k$ voters are added to $W'$.
  Constraint~(\ref{ilp:voterbound}) ensures that the voters added to $W'$ are available in $W$.
  Constraint~(\ref{ilp:winning}) ensures that no other alternative has a higher Plurality score than alternative~$p$.
  It can be easily verified that there is a solution for this integer linear program
  if and only if there is a solution to the input instance.
\end{proof}

\subsection{Combinatorial Parameters}



We focus now on the complexity of \textsc{Plurality}-\probCCCAV as a
function of two combinatorial parameters: \begin{inparaenum}[(a)]
  \item the maximum swap distance $d$ between the leader and his followers in one bundle, and
  \item the maximum size $b$ of each voter's bundle.
\end{inparaenum}

Specifically, we show that if \combRule is a \nonfullassignment (that is, it is not required to contain all voters at a given distance),
then Plurality-\probCCCAV is polynomial-time solvable if the maximum bundle size is one,
but if the maximum bundle size is two, then Plurality-\probCCCAV is $\np$-hard.
However, if \combRule is a \fullassignment (that is, if is required to contain all voters at a given distance),
then Plurality-\probCCCAV is polynomial-time solvable if the maximum bundle size is two,
but if the maximum bundle size is three, then Plurality-\probCCCAV is $\np$-hard.

First, if $b = 1$, then \probCCCAV reduces to \probCCAV
and, thus, can be solved by a greedy algorithm in polynomial time~\cite{BTT92}.
\begin{observation}\label{obs:b1p}
 If the maximum bundle size~$b$ is one,
 then \textsc{Plurality}-\probCCCAV is polynomial-time solvable.
\end{observation}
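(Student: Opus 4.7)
The plan is to observe that when the maximum bundle size $b$ equals one, the combinatorial aspect of the bundling function collapses completely. Since the preliminaries require $w \in \combRule(w)$ for every voter $w \in \electionW$, having $|\combRule(w)| \le 1$ forces $\combRule(w) = \{w\}$ for each unregistered voter. Thus, adding a voter $w$ to $\electionW'$ only adds $w$ itself to the election.

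Given this, I would argue that the input instance of \textsc{Plurality}-\probCCCAV is, with no modification beyond discarding the trivial bundling function, an instance of the standard \textsc{Plurality}-\probCCAV problem: We are given an election $(\electionC, \electionV)$, a set $\electionW$ of unregistered voters, a preferred alternative $p$, and a bound $k$, and we must decide whether there is a size-at-most-$k$ subset $\electionW' \subseteq \electionW$ such that $p$ wins the election $(\electionC, \electionV \cup \electionW')$.

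The final step is to invoke the classical result of Bartholdi, Tovey, and Trick~\cite{BTT92}, which shows that \textsc{Plurality}-\probCCAV is solvable in polynomial time via a straightforward greedy algorithm: Repeatedly add an unregistered \votertype{p} until either $p$ becomes a winner or the budget $k$ is exhausted (or no such voter exists). Since no step of the reduction introduces any blow-up, the overall running time remains polynomial in the input size.

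The proof is essentially a one-line reduction, so there is no real obstacle; the only subtlety is noting that the convention $w \in \combRule(w)$ rules out empty bundles and forces every bundle to be a singleton when $b = 1$, so the reduction to standard \probCCAV is exact rather than merely analogous.
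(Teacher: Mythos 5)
Your proposal is correct and matches the paper's own (one-line) justification: since $w \in \combRule(w)$ and $b=1$ force every bundle to be the singleton $\{w\}$, the instance is exactly a standard \textsc{Plurality}-\probCCAV instance, which is polynomial-time solvable by the greedy algorithm of \citet{BTT92}. No differences worth noting.
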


However, for arbitrary \assignments, \textsc{Plu\-rality}-\probCCCAV becomes intractable as soon as $b = 2$.

\begin{theorem}\label{thm:np-h-b=2}
  \thmnphbtwo
\end{theorem}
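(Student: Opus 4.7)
The plan is to reduce from a classical $\np$-hard graph problem (most naturally \probClique), exploiting bundles of size exactly two to encode edges, in the spirit of the reduction in \cref{thm:CCCAVw1k-anonymous-b=3} but without the freedom of triple-sized bundles. Given an instance $(G=(V,E),h)$ of \probClique, I would introduce one unregistered \emph{vertex-voter} $x_v$ per $v \in V$, with singleton bundle $\combRule(x_v)=\{x_v\}$, ranking an auxiliary alternative $g$ first; and for each edge $e=\{u,v\}\in E$ I would introduce one or two \emph{edge-voters} ranking $p$ first, whose bundles of size two each pull in one endpoint-voter (for example, $\combRule(y_e^u)=\{y_e^u,x_u\}$ and $\combRule(y_e^v)=\{y_e^v,x_v\}$). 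Additional registered voters would create a threat alternative $w$ that initially leads $p$.

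Next, I would calibrate the base election so that two conditions hold simultaneously: first, $w$ leads $p$ by exactly the number of edge-voters one expects to pick, so that the solution must consist of exactly that many edge-voter leaders, each contributing one first-place vote to $p$; second, $g$'s initial score is tuned so that $p$ remains a (co-)winner only when the set of distinct vertex-voters pulled in by the chosen edge-voters has size at most $h$. Since any $\binom{h}{2}$ edges cover at least $h$ vertices with equality iff they form a clique, any valid solution would correspond to a size-$h$ clique in $G$; conversely, a size-$h$ clique yields a valid solution by selecting its $\binom{h}{2}$ edge-voters (or both orientations, depending on the gadget).

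The principal obstacle is the inherent asymmetry of size-$2$ bundles: a single edge-voter can only pull in one of its two endpoint-voters, so a naive ``canonical endpoint'' assignment leaves room for cheap star-like solutions that defeat the clique counting argument. The natural workaround is to duplicate each edge-voter (one per orientation) and to tune the budget $k$ and the required score gain for $p$ so that both orientations of every chosen edge must be selected; balancing these constraints together with $g$'s initial score is the delicate combinatorial step, and may require adding a few auxiliary voters or padding alternatives to rule out degenerate mixes. Should this balancing prove too fragile, an analogous reduction from \probVC (one vertex-voter per vertex, one edge-voter per edge whose size-$2$ bundle drags in one endpoint) or from \probIS should go through with the same counting principle.
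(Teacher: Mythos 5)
You have correctly identified the crux---that a size-$2$ bundle can certify only one endpoint of an edge---but the proposal does not overcome it, and the suggested calibration cannot. Concretely, once $w$'s lead over $p$ equals the number of edge-voters to be picked and $k$ equals that number, every selected leader must indeed be an edge-voter; but nothing prevents the solution from selecting $2\binom{h}{2}$ orientations that all point at one high-degree vertex (a star). Then only a single vertex-voter is pulled in, $g$ gains one point, and $p$ wins even if $G$ has no clique. No tuning of $k$, of $g$'s initial score, or of the required gain for $p$ can distinguish ``both orientations of $\binom{h}{2}$ edges'' from ``one orientation each of $2\binom{h}{2}$ edges sharing a head'', because both give $p$ the same score and the penalty to $g$ depends only on the set of heads of the chosen orientations, not on the set of edges. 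Restricting to bounded-degree graphs kills the star but makes \probClique polynomial-time solvable, and the fallback to \probVC inherits the same defect: with $b=2$ an edge-voter can vouch for only one fixed endpoint, so a selection certifies an orientation rather than a cover or a clique. The missing ingredient is a mechanism that \emph{forces paired selections}, and none is supplied.

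The paper's proof avoids edge--vertex incidence altogether and reduces from \probSAT, a restricted \textsc{3SAT} in which every variable occurs exactly twice positively and twice negatively. For each variable it builds an eight-vertex cycle in the bundling graph alternating between four $p$-voters and the four clause-voters of that variable's occurrences, each size-$2$ bundle consisting of a leader and its successor on the cycle. Setting $k=4|\mathcal{X}|$ and requiring $p$ to gain all $4|\mathcal{X}|$ of its voters forces, per variable, a selection of four bundles around the cycle whose clause-voters are exactly the two occurrences of one sign of the literal; this encodes a truth value, and the clause alternatives' initial scores (tuned to tolerate at most $|C_i|-1$ added \voterstype{c_i}) encode satisfaction. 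The cycle structure is precisely the pairing mechanism your construction lacks; if you wish to keep a graph-based reduction for $b=2$, you would need to build an analogous gadget rather than rely on score arithmetic.
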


\begin{proof}
  \begin{figure}[t!]
    \centering
    
    \begin{subfigure}[a]{.48\textwidth}
      \begin{tikzpicture}[every node/.style={draw=black,thick,circle,inner sep=0pt}]
        \def \n {8}
        \def \radius {2.5cm}
        \def \margin {12}
        \def \nodesize {0.85cm}
        
        \def \s {1}
        \node[draw, circle,fill=black!30][minimum size=\nodesize] at ({360/\n * (\s - 1)}:\radius) {$c_r^{x_j}$};
        \draw[<-, >=latex] ({360/\n * (\s - 1)+\margin}:\radius) 
        arc ({360/\n * (\s - 1)+\margin}:{360/\n * (\s)-\margin}:\radius);

        \def \s {2}
        \node[draw, circle][minimum size=\nodesize] at ({360/\n * (\s - 1)}:\radius) {$p^j_2$};
        \draw[<-, >=latex] ({360/\n * (\s - 1)+\margin}:\radius) 
        arc ({360/\n * (\s - 1)+\margin}:{360/\n * (\s)-\margin}:\radius);

        \def \s {3}
        \node[draw, circle,fill=black!30][minimum size=\nodesize] at ({360/\n * (\s - 1)}:\radius) {$c_i^{\lnot x_j}$};
        \draw[<-, >=latex] ({360/\n * (\s - 1)+\margin}:\radius) 
        arc ({360/\n * (\s - 1)+\margin}:{360/\n * (\s)-\margin}:\radius);

        \def \s {4}
        \node[draw, circle][minimum size=\nodesize] at ({360/\n * (\s - 1)}:\radius) {$p^j_1$};
        \draw[<-, >=latex] ({360/\n * (\s - 1)+\margin}:\radius) 
        arc ({360/\n * (\s - 1)+\margin}:{360/\n * (\s)-\margin}:\radius);
        
        \def \s {5}
        \node[draw, circle,fill=black!30][minimum size=\nodesize] at ({360/\n * (\s - 1)}:\radius) {$c_t^{x_j}$};
        \draw[<-, >=latex] ({360/\n * (\s - 1)+\margin}:\radius) 
        arc ({360/\n * (\s - 1)+\margin}:{360/\n * (\s)-\margin}:\radius);

        \def \s {6}
        \node[draw, circle][minimum size=\nodesize] at ({360/\n * (\s - 1)}:\radius) {$p^j_4$};
        \draw[<-, >=latex] ({360/\n * (\s - 1)+\margin}:\radius) 
        arc ({360/\n * (\s - 1)+\margin}:{360/\n * (\s)-\margin}:\radius);
        
        \def \s {7}
        \node[draw, circle,fill=black!30][minimum size=\nodesize] at ({360/\n * (\s - 1)}:\radius) {$c_s^{\lnot x_j}$};
        \draw[<-, >=latex] ({360/\n * (\s - 1)+\margin}:\radius) 
        arc ({360/\n * (\s - 1)+\margin}:{360/\n * (\s)-\margin}:\radius);

        \def \s {8}
        \node[draw, circle][minimum size=\nodesize] at ({360/\n * (\s - 1)}:\radius) {$p^j_3$};
        \draw[<-, >=latex] ({360/\n * (\s - 1)+\margin}:\radius) 
        arc ({360/\n * (\s - 1)+\margin}:{360/\n * (\s)-\margin}:\radius);
      \end{tikzpicture}
      \caption{Gadget used in \cref{thm:np-h-b=2}}\label{fig:gadget b=2}
    \end{subfigure}
    \begin{subfigure}{.48\textwidth}
      \begin{tikzpicture}[every node/.style={draw=black,thick,circle,inner sep=0pt}]
        \def \n {12}
        \def \radius {2.5cm}
        \def \margin {12}
        \def \nodesize {0.85cm}

        \def \s {1}
        \node[draw, circle,fill=black!30][minimum size=\nodesize] at ({360/\n * (\s - 1)}:\radius) {$c_r^{x_j}$};
        \draw[-, >=latex] ({360/\n * (\s - 1)+\margin}:\radius) 
        arc ({360/\n * (\s - 1)+\margin}:{360/\n * (\s)-\margin}:\radius);

        \def \s {2}
        \node[draw, circle][minimum size=\nodesize] at ({360/\n * (\s - 1)}:\radius) {$p^j_4$};
        \draw[-, >=latex] ({360/\n * (\s - 1)+\margin}:\radius) 
        arc ({360/\n * (\s - 1)+\margin}:{360/\n * (\s)-\margin}:\radius);

        \def \s {3}
        \node[draw, circle][minimum size=\nodesize] at ({360/\n * (\s - 1)}:\radius) {$p^j_3$};
        \draw[-, >=latex] ({360/\n * (\s - 1)+\margin}:\radius) 
        arc ({360/\n * (\s - 1)+\margin}:{360/\n * (\s)-\margin}:\radius);

        \def \s {4}
        \node[draw, circle,fill=black!30][minimum size=\nodesize] at ({360/\n * (\s - 1)}:\radius) {$c_i^{\lnot x_j}$};
        \draw[-, >=latex] ({360/\n * (\s - 1)+\margin}:\radius) 
        arc ({360/\n * (\s - 1)+\margin}:{360/\n * (\s)-\margin}:\radius);
        
        \def \s {5}
        \node[draw, circle][minimum size=\nodesize] at ({360/\n * (\s - 1)}:\radius) {$p^j_2$};
        \draw[-, >=latex] ({360/\n * (\s - 1)+\margin}:\radius) 
        arc ({360/\n * (\s - 1)+\margin}:{360/\n * (\s)-\margin}:\radius);

        \def \s {6}
        \node[draw, circle][minimum size=\nodesize] at ({360/\n * (\s - 1)}:\radius) {$p^j_1$};
        \draw[-, >=latex] ({360/\n * (\s - 1)+\margin}:\radius) 
        arc ({360/\n * (\s - 1)+\margin}:{360/\n * (\s)-\margin}:\radius);
        
        \def \s {7}
        \node[draw, circle,fill=black!30][minimum size=\nodesize] at ({360/\n * (\s - 1)}:\radius) {$c_t^{x_j}$};
        \draw[-, >=latex] ({360/\n * (\s - 1)+\margin}:\radius) 
        arc ({360/\n * (\s - 1)+\margin}:{360/\n * (\s)-\margin}:\radius);

        \def \s {8}
        \node[draw, circle][minimum size=\nodesize] at ({360/\n * (\s - 1)}:\radius) {$p^j_8$};
        \draw[-, >=latex] ({360/\n * (\s - 1)+\margin}:\radius) 
        arc ({360/\n * (\s - 1)+\margin}:{360/\n * (\s)-\margin}:\radius);

        \def \s {9}
        \node[draw, circle][minimum size=\nodesize] at ({360/\n * (\s - 1)}:\radius) {$p^j_7$};
        \draw[-, >=latex] ({360/\n * (\s - 1)+\margin}:\radius) 
        arc ({360/\n * (\s - 1)+\margin}:{360/\n * (\s)-\margin}:\radius);

        \def \s {10}
        \node[draw, circle,fill=black!30][minimum size=\nodesize] at ({360/\n * (\s - 1)}:\radius) {$c_s^{\lnot x_j}$};
        \draw[-, >=latex] ({360/\n * (\s - 1)+\margin}:\radius) 
        arc ({360/\n * (\s - 1)+\margin}:{360/\n * (\s)-\margin}:\radius);

        \def \s {11}
        \node[draw, circle][minimum size=\nodesize] at ({360/\n * (\s - 1)}:\radius) {$p^j_6$};
        \draw[-, >=latex] ({360/\n * (\s - 1)+\margin}:\radius) 
        arc ({360/\n * (\s - 1)+\margin}:{360/\n * (\s)-\margin}:\radius);

        \def \s {12}
        \node[draw, circle][minimum size=\nodesize] at ({360/\n * (\s - 1)}:\radius) {$p^j_5$};
        \draw[-, >=latex] ({360/\n * (\s - 1)+\margin}:\radius) 
        arc ({360/\n * (\s - 1)+\margin}:{360/\n * (\s)-\margin}:\radius);
      \end{tikzpicture}
      \caption{Gadget used in \cref{thm:fullB3HARD}}
      \label{fig:gadget b=3full}
    \end{subfigure}    
    \caption{Part of the construction used in \cref{thm:np-h-b=2} and \cref{thm:fullB3HARD}.
    Specifically, we show the cycle corresponding to variable~$x_j$ which occurs as a negative literal in clauses~$C_i$ and $C_s$
    and as a positive literal in clauses $C_{r}$ and $C_{t}$.}  
  \label{fig:ex-3SAT-VAR-gadget}
  \end{figure}
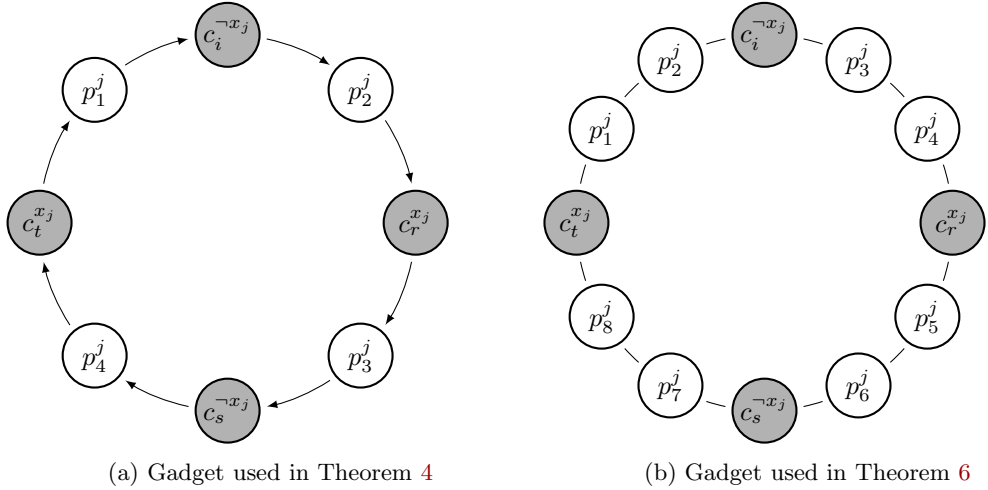

  We provide a reduction from a restricted variant of the $\np$-complete problem~\textsc{3SAT},
  where each clause has either two or three literals, 
  each variable occurs exactly four times,
  twice as a positive literal,
  and twice as a negative literal.

  \probDef
    {\probSAT}
    {A collection~$\mathcal{C}$ of clauses over the set~$\mathcal{X} = \{x_1, \ldots, x_n\}$ of variables such that each clause has either two or three literals,
    and each variable appears exactly four times, twice as a positive literal and twice as a negative literal.}
    {Is there a truth assignment that satisfies all the clauses in $\mathcal{C}$?}

  This variant is still $\np$-hard since from~\citet[Theorem 2.1]{Tov84}, 
  one obtained $\np$-hardness for \textsc{3SAT} where 
  each clause has either two or three literals, 
  each variable occurs either two or three times,
  and at most one time as a negative literal.

  We can reduce from this problem to \probSAT as follows.
  First, we assume that no variable appears only positively,
  because if this is the case,
  we can just set it to true and remove it.
  For each variable~$x_i$ that appears three times  
  (two times positively and one time negatively),
  we add one new variable~$y_i$,
  and two new clauses $\{\lnot x_i, \lnot y_i, \lnot y_i\}$ and $\{\lnot y_i, \lnot y_i\}$.
  For each variable~$x_i$ that appears two times
  (one time positively and one time negatively),
  we add one new clause~$\{\lnot x_i, x_i\}$.
  It can be verified that the original instance is a yes-instance if and only if the newly constructed instance is a yes-instance for \probSAT.

  Now, given a \probSAT instance~$(\mathcal{C}, \mathcal{X})$,
  where $\mathcal{C}$ is the set of clauses over the set of variables $\mathcal{X}$,
  we construct an election~$(\electionC, \electionV)$.
  We set $k:=4|\mathcal{X}|$,
  and construct the set~$\electionC$ of alternatives to be $\electionC:=\{p, w\} \cup \{c_i \mid C_i \in \mathcal{C}\}$,
  where the $c_i$ are called the \emph{clause} alternatives.
  We construct the set~$\electionV$ of registered voters such that the initial score of $w$ is $4|\mathcal{X}|$,
  the initial score of the clause alternative~$c_i$ is $4|\mathcal{X}| - |C_i| + 1$   (where $|C_i|$ is the number of literals that clause~$C_i$ contains),
  and the initial score of $p$ is zero. 
  We construct the set~$\electionW$ of unregistered voters as follows
  (throughout the rest of the proof, we will often write $\ell_j$ to refer to a literal that
  contains variable $x_j$; depending on the context, $\ell_j$ will mean either
  $x_j$ or $\lnot x_j$ and the exact meaning will always be clear):
  \begin{enumerate}
    \item for each variable~$x_j \in \mathcal{X}$,
    we construct four \voterstype{p}, denoted by $p^{j}_{1}, p^{j}_{2}, p^{j}_{3}, p^{j}_{4}$;
    we call such voters \emph{variable} voters.
    \item for each clause~$C_i \in \mathcal{C}$ and each literal~$\ell$ contained in $C_i$,
    we construct a \voterstype{c_i}, denoted by $c_i^{\ell}$;
    we call such voter a \emph{clause} voter. 
    Note that clause~$C_i$ has exactly $|C_i|$ corresponding clause voters.
  \end{enumerate}
  We define the assignment function $\combRule$ as follows: For each
  variable $x_j \in \mathcal{X}$ that occurs as a negative literal
  ($\lnot x_j$) in clauses $C_{i}$ and $C_{s}$, and as a positive
  literal ($x_j$) in clauses $C_{r}$ and $C_{t}$, we set
  \begin{alignat*}{2}
  \combRule(p^j_1) &:= \{p^j_1, c_i^{\lnot x_j}\}, &\quad \combRule(c_i^{\lnot x_j}) &:= \{c_i^{\lnot x_j}, p^j_2\},\\
  \combRule(p^j_2) &:= \{p^j_2, c_r^{x_j}\},&\quad \combRule(c_r^{x_j}) &:=
  \{c_r^{x_j}, p^j_3\},\\
  \combRule(p^j_3) &:= \{p^j_3, c_s^{\lnot
    x_j})\}, &\quad \combRule(c_s^{\lnot x_j}) &:= \{c_s^{\lnot x_j}, p^j_4\},\\
  \combRule(p^j_4) &:= \{p^j_4, c_t^{x_j}\}, &\quad \combRule(c_t^{x_j}) &:= \{c_t^{x_j}, p^j_1\}.
\end{alignat*}  

Notice that the bundling graph (\cref{def:bundlinggraph}) contains a cycle corresponding to each variable, as depicted in \cref{fig:gadget b=2}.  

  The general idea is that in order to let $p$ win, all \voterstype{p}
  must be in $\combRule(\electionW')$ and no clause alternative~$c_i$ should
  gain more than $(|C_i|-1)$ points.  More formally, we show now that
  $(\mathcal{C}, \mathcal{X})$ has a satisfying truth assignment if
  and only if there is a size-$k$ subset~$W'\subseteq W$ such that $p$
  wins election~$(\electionC, \electionV\cup\combRule(W'))$ (recall
  that $k = 4|\mathcal{X}|$).

  For the ``if'' direction, let $\beta: \mathcal{X} \to \{T, F\}$ be a
  satisfying truth assignment function for~$(\mathcal{C},
  \mathcal{X})$.  Intuitively, $\beta$ will guide us through
  constructing the set $W'$ in the following way: First, for each
  variable $x_j$, we put into $W'$ those voters $c_i^{\ell_j}$ for
  whom $\beta$ sets $\ell_j$ to false. This way in $\combRule(W')$ we
  include $2|\mathcal{X}|$ \voterstype{p} and, for each clause $c_i$,
  at most $(|C_i|-1)$ \voterstype{c_i}. The former is true because exactly
  $|\mathcal{X}|$ literals are set to false by $\beta$, each literal
  is included in exactly two clauses, and adding each $c_i^{\ell_j}$ into
  $W'$ also includes a unique \votertype{p} into $\combRule(W')$; the
  latter is true because if $\beta$ is a satisfying truth assignment
  then each clause~$C_i$ contains at most $(|C_i|-1)$ literals set to false. 
  Then,
  for each clause voter $c_{i}^{\ell_j}$ already in $W'$, we also add the
  voter $p^j_a$, $1 \leq a \leq 4$, that contains $c_{i}^{\ell_j}$ in
  his or her bundle.
  This way we include in $\combRule(W')$ additional
  $2|\mathcal{X}|$ \voterstype{p} without increasing the number of
  clause voters included.  
  Formally, we define $W'$ as follows:
  \begin{align*}
  W':= &\{c_i^{\lnot x_j}, p^j_a \mid \lnot{x_j} \in C_i \wedge \beta(x_j) =
  T \wedge c_i^{\lnot x_j} \in \combRule(p^j_a) \} \cup \\
   &\{c_i^{x_j}, p^j_a \mid x_j \in C_i \wedge \beta(x_j) = F \wedge c_i^{\lnot x_j} \in \combRule(p^j_a)\}\text{.}
  \end{align*}
  As per our intuitive argument, one can verify that all
  \voterstype{p} are contained in $\combRule(\electionW')$ and each
  clause alternative~$c_i$ gains at most $(|C_i|-1)$ points.

  For the ``only if'' part, let $W'$ be a subset of voters such that $p$ wins
  election~$(\electionC, \electionV\cup \combRule(W'))$.
  
  First, we make the following observation. Let $x_j$ be some variable
  and consider clauses $C_i$ and $C_s$ where literal $\lnot x_j$ appears, and
  clauses $C_r$ and $C_t$ where literal $x_j$ appears. We claim that we can
  assume that $\combRule(W')$ contains at most two voters among
  $c_i^{\lnot x_j}$, $c_s^{\lnot x_j}$, $c_r^{x_j}$, and $c_t^{x_j}$.
  First, let us assume that $\combRule(W')$ contains
  all of these voters. Since $p$ is a winner of election $(\electionC,
  \electionV\cup \combRule(W'))$, it must be that $\combRule(W')$
  also contains all four \voterstype{p} of the form $p^j_a$, $1 \leq a \leq 4$. 
  This means that $W'$ includes at least four voters from the set:
  \[ Q_j = \{ c_i^{\lnot x_j}, c_s^{\lnot x_j}, c_r^{x_j}, c_t^{x_j}, p^j_1, p^j_2, p^j_3, p^j_4\}. \]
  In effect, we can replace $W'$ with $W''$ defined as 
  \[
  W'' := (W' \setminus Q_j) \cup \{ c^{x_j}_r\} \cup \{p^j_a \mid c^{x_j}_r \in \combRule(p^j_a)\} \cup
                                \{ c^{x_j}_t\} \cup \{p^j_a \mid c^{x_j}_t \in \combRule(p^j_a)\}.
  \]
  Compared to $W'$, $W''$ contains at most as many voters as $W'$
  does, $\combRule(W'')$ contains the same number of \voterstype{p} as
  $\combRule(W')$ does, and for each clause alternative~$c$,
  $\combRule(W'')$ contains no more \voterstype{c} than
  $\combRule(W')$ does. Thus, $p$ is still a winner of election
  $(\electionC, \electionV\cup \combRule(W''))$ and $W''$ is a valid
  solution.
 
  Furthermore, let us assume that exactly three voters among
  $c_i^{\lnot x_j}$, $c_s^{\lnot x_j}$, $c_r^{x_j}$, and $c_t^{x_j}$
  are included in $\combRule(W')$. For the sake of concreteness, let
  $c_i^{\lnot x_j}$ be the voter not in $\combRule(W')$.
  We use a similar argument as before.
  Specifically, since $p$ is a winner of $(\electionC,
  \electionV\cup \combRule(W'))$, $W'$ must include at least four
  voters among those in $Q_j$. Replacing $W'$ with $W''$ (defined in
  the previous paragraph) works again.
  Notice that, replacing $W'$ with $W''$
  would also work if $c_s^{\lnot x_j}$ was the voter not included in
  $W'$; if either $c_r^{x_j}$ or $c_t^{x_j}$ were the not-included
  voter, we would replace $W'$ with 
  \[W''' := (W' \setminus Q_j) \cup \{
  c^{\lnot x_j}_i\} \cup \{p^j_a \mid c^{\lnot x_j}_i \in
  \combRule(p^j_a)\} \cup \{ c^{\lnot x_j}_s\} \cup \{p^j_a \mid
  c^{\lnot x_j}_s \in \combRule(p^j_a)\}.\]
  

  We will now argue that for each variable $x_j$, $\combRule(W')$
  contains either the two voters of the form $c^{x_j}$ or the two
  voters of the form $c^{\lnot x_j}$. We start by observing that for
  each two clauses that contain the same variable but not the same
  literal, at least one corresponding clause voter must be added to
  the election (otherwise $\combRule(W')$ would not contain all the
  unregistered \voterstype{p}).  Thus, if one clause voter is not
  contained in $\combRule(W')$, then both of its ``neighboring''
  (in the sense of being adjacent in the \bundlinggraph, 
  depicted in \cref{fig:gadget b=2})
  clause voters must be included in $\combRule(W')$.
  Together with the arguments
  from previous paragraphs, this means that for each variable $x_j$,
  $\combRule(W')$ either contains the two voters of the form $c^{x_j}$ or the two
  voters of the form~$c^{\lnot x_j}$.

  This is critical for the sanity of the truth assignment
  function~$\beta$ we will construct now.  In order to let $p$ win,
  all \voterstype{p} must be
  added to the election. This means that for each two clauses that
  contain the
  same variable but not the same literal, at least one corresponding
  clause
  voter must be added to the election.

  We set $\beta: \mathcal{X} \to \{T, F\}$ such that $\beta(x_j) := T$
  if there is a clause voter~$c^{x_j}_{i} \notin \combRule(W')$, and
  $\beta(x_j) := F$ if there is a clause voter~$c^{\lnot x_j}_i \notin
  \combRule(W')$.  Following the previous arguments, function $\beta$ is
  well-defined. It is a satisfying truth assignment function for
  $(C,\mathcal{X})$ because for each clause $C_i$, by the fact that
  $p$ is a winner in election $(\electionC, \electionV\cup
  \combRule(W'))$, we have that $\combRule(W')$ contains at most $(|C_i|-1)$
  \voterstype{c_i} for each clause alternative~$c_i$. 
  This is possible only if each clause contains at
  least one literal $\ell$ such that $\beta$ sets $\ell$ to truth.
\end{proof}

The situation is different for \fullassignments,
because we can extend the greedy algorithm by \citet{BTT92} to bundles of size two.

\begin{theorem}\label{obs:b2fullp}
  \obsbtwofullp
\end{theorem}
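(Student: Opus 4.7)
The plan is to exploit the structural constraint provided by \cref{lem:x_in_y_y_in_x}. Because $\combRule$ is a \fullassignment, the \bundlinggraph can be treated as undirected, and since the maximum bundle size is two, every vertex has degree at most one. Consequently the \bundlinggraph decomposes into isolated vertices (\emph{singleton atoms}) and isolated edges (\emph{pair atoms}), and the voters of $\electionW$ are partitioned accordingly.

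First I would show that WLOG $\electionW'$ contains at most one voter from each pair atom: if both endpoints of a pair lie in $\electionW'$ then the bundle $\combRule(\electionW')$ is unchanged by removing one of them, and the solution only becomes cheaper. So the task reduces to choosing a set of at most $k$ \emph{atoms}. Next, I would discard every atom whose bundle contains no \votertype{p}, since including such an atom only raises opponents' Plurality scores and never helps $p$. What remains are atoms of exactly three types: \emph{both-$p$ pairs} (contributing $+2$ to $s(p)$ at unit cost), \emph{singleton-$p$ atoms} ($+1$ to $s(p)$), and, for each $a \in \electionC \setminus \{p\}$, \emph{mixed $(p,a)$-pairs} ($+1$ to $s(p)$ and $+1$ to $s(a)$ simultaneously).

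With these observations the feasibility question becomes a small integer system over $O(m)$ nonnegative variables $\alpha,\beta,\gamma_a$ (the numbers of atoms of each type used), with per-type availability bounds, a total budget of $k$, and one linear win constraint per opponent $a$ of the form $2\alpha+\beta+\sum_{a'\neq a}\gamma_{a'}\geq s(a)-s(p)$. I would solve it by a greedy/exchange argument in the spirit of \citet{BTT92}: include as many both-$p$ pairs as possible, then as many singleton-$p$ atoms as possible, and finally allocate any remaining budget to mixed pairs so that, for each opponent $a$ whose score still exceeds $s(p)$, enough mixed pairs with leader literals $a' \neq a$ are added to close the gap.

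The main obstacle I expect is justifying the mixed-pair phase, because a $(p,a)$-pair both narrows the gap with every opponent $a'\neq a$ and widens the gap with $a$ itself, so the choice of which mixed pairs to add interacts with multiple constraints at once. I plan to handle this by a standard exchange argument showing that any feasible solution can be transformed, without increase in size, into one that follows the greedy order; alternatively, since only $O(m)$ variables appear, one may directly check feasibility of the resulting bounded-dimension integer program (e.g., via a small flow/matching formulation), which is clearly polynomial-time.
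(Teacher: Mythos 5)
Your proposal takes essentially the same route as the paper: both use \cref{lem:x_in_y_y_in_x} to treat the \bundlinggraph as an undirected graph of maximum degree one, decompose $\electionW$ into isolated vertices and disjoint edges, discard atoms containing no \votertype{p}, and then greedily add both-$p$ edges first, $p$-singletons second, and mixed edges last (the paper realizes your final phase by adding mixed edges sorted ascendingly by the current score of the non-$p$ endpoint, which is exactly the water-filling/exchange step you sketch). One small caution: an integer program with $O(m)$ variables is not of bounded dimension and Lenstra only yields fixed-parameter tractability in $m$, so for the mixed-pair phase you should rely on your greedy/exchange or flow argument rather than that fallback.
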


\begin{proof}
  Since $b = 2$ and $\combRule$ is a \fullassignment,
  the \bundlinggraph has maximum degree one.
  Therefore, it contains only isolated vertices and disjoint edges.
  We first add the disjoint edges with both end-points corresponding to \voterstype{p}.
  If we have some more budget,
  then we add isolated vertices corresponding to \voterstype{p}.
  We are left only with isolated vertices corresponding to non-\voterstype{p},
  which we throw away,
  disjoint edges with both end-points corresponding to non-\voterstype{p},
  which we also throw away,
  and disjoint edges with one end-point corresponding to a \votertype{p} and another end-point corresponding to a non-\votertype{p},
  which we treat now.
  Specifically, we add these disjoint edges with one end-point corresponding to a \votertype{p} and another end-point corresponding to a non-\votertype{p},
  sorted ascendingly by the current score of the non-\votertype{p}.
\end{proof}

However, as soon as $b = 3$, we obtain $\np$-hardness, 
by modifying the reduction used in \cref{thm:np-h-b=2}.

\begin{theorem}\label{thm:fullB3HARD}
  \thmfullbthreehard
\end{theorem}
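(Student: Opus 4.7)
The plan is to adapt the construction of \Cref{thm:np-h-b=2} to the \fullassignment setting. The central difficulty is that, by \Cref{lem:x_in_y_y_in_x}, a \fullassignment has a symmetric bundling graph, so the directed $8$-cycle gadget of \Cref{thm:np-h-b=2} is no longer available. Instead, I will use the undirected $12$-cycle gadget depicted in \Cref{fig:gadget b=3full}. Starting from a \probSAT instance, for each variable~$x_j$ that appears positively in clauses $C_r,C_t$ and negatively in $C_i,C_s$, I introduce eight \voterstype{p} $p^j_1,\dots,p^j_8$ (double the previous count) together with the usual four clause voters $c_r^{x_j},c_i^{\neg x_j},c_t^{x_j},c_s^{\neg x_j}$. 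The preference orders of these twelve voters are designed so that, for an appropriate swap distance~$d$, the \fullassignment produces exactly the $12$-cycle of \Cref{fig:gadget b=3full} within each gadget, while voters from distinct gadgets lie at pairwise swap distance larger than~$d$. The initial scores are $8|\mathcal{X}|$ for $w$, $8|\mathcal{X}|-|C_\tau|+1$ for each clause alternative~$c_\tau$, and zero for $p$; the budget is $k:=4|\mathcal{X}|$.

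The forward direction is routine. Given a satisfying assignment~$\beta$, for each variable~$x_j$ with $\beta(x_j)=T$ I place into $W'$ the four voters $\{p^j_3,p^j_2,p^j_7,p^j_6\}$; their bundles collectively cover all eight \voterstype{p} of the gadget and exactly the two negative clause voters $c_i^{\neg x_j}$ and $c_s^{\neg x_j}$. For $\beta(x_j)=F$ I symmetrically choose $\{p^j_4,p^j_1,p^j_8,p^j_5\}$, yielding the two positive clause voters. Then $|W'|=4|\mathcal{X}|$, every \votertype{p} lies in $\combRule(W')$, and because $\beta$ satisfies every clause, each clause alternative~$c_\tau$ gains at most $|C_\tau|-1$ points, so $p$ co-wins with $w$.

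For the backward direction, let $W'$ with $|W'|\le 4|\mathcal{X}|$ make $p$ a Plurality winner. Since $p$ must match $w$'s score, all $8|\mathcal{X}|$ \voterstype{p} lie in $\combRule(W')$; since every bundle contains at most two \voterstype{p}, exactly four bundles come from each variable gadget and partition its eight \voterstype{p}. The ``coverable-pair'' graph on these eight \voterstype{p} is itself an $8$-cycle and hence has exactly two perfect matchings: \emph{matching~A}, the four pairs of consecutive \voterstype{p} along the $12$-cycle (each coverable by a bundle centered at one of its two \voterstype{p}, with a binary choice among the two adjacent clause voters); and \emph{matching~B}, the four pairs of \voterstype{p} straddling a clause voter (coverable only by the bundle centered at that clause voter, so all four clause voters of the gadget enter $\combRule(W')$). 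The key claim is: the set $S_j$ of clause voters of gadget~$x_j$ inside $\combRule(W')$ always contains $\{c_r^{x_j},c_t^{x_j}\}$ or $\{c_i^{\neg x_j},c_s^{\neg x_j}\}$. Indeed, suppose for contradiction some positive clause voter $P^*$ and some negative clause voter $N^*$ of gadget~$x_j$ are both missing from $S_j$. The four clause voters alternate positive/negative along the $12$-cycle, so $P^*$ and $N^*$ are cycle-adjacent, separated by exactly one pair of consecutive \voterstype{p} $\{p',p''\}$. Among the twelve bundles of gadget~$x_j$, only four contain $p'$ or $p''$, namely $\combRule(P^*)$, $\combRule(p')$, $\combRule(p'')$, and $\combRule(N^*)$, each of which places $P^*$ or $N^*$ into $S_j$. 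Since $\{p',p''\}$ must be covered by the four selected bundles, at least one of these four bundles is chosen, contradicting $P^*,N^*\notin S_j$.

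With the key claim in hand, define $\beta(x_j):=T$ if $\{c_i^{\neg x_j},c_s^{\neg x_j}\}\subseteq\combRule(W')$, and $\beta(x_j):=F$ otherwise. For any clause~$C_\tau$, the score bound on $c_\tau$ forces some clause voter $c_\tau^{\ell^*}\notin\combRule(W')$; write $\ell^*=\pm x_{j^*}$. If $\ell^*=x_{j^*}$, then some positive clause voter of gadget~$x_{j^*}$ is missing from $\combRule(W')$, so by the key claim both negative clause voters of gadget~$x_{j^*}$ are in $\combRule(W')$; hence $\beta(x_{j^*})=T$ and $\ell^*$ is true. If $\ell^*=\neg x_{j^*}$, the symmetric argument gives $\beta(x_{j^*})=F$, so $\ell^*$ is true. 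Thus every clause is satisfied by $\beta$, completing the reduction. The main obstacle is the key claim, but its proof is the short pair-covering contradiction just sketched; a secondary, routine technicality is explicitly constructing preference orders that realize the intended $12$-cycle as the swap-distance-$d$ neighborhood graph of each gadget, which can be achieved by standard preference-spacing arguments.
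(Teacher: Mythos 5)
Your proposal takes exactly the paper's route: the paper uses this same 12-cycle gadget with eight \voterstype{p} per variable (the figure you cite is the paper's own) and simply declares the correctness argument ``analogous'' to that of \cref{thm:np-h-b=2}. Your worked-out backward direction is correct and in fact supplies the details the paper omits, with your covering-based ``key claim'' giving a cleaner, direct substitute for the exchange argument used in the $b=2$ proof.
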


\begin{proof}
  We use a similar reduction as in the proof of \cref{thm:np-h-b=2},
  with the only difference that 
  we introduce eight \voterstype{p} for each variable instead of four \voterstype{b}.
  We set the \fullassignment~$\combRule$ such that each variable voter's bundle consists of two variable voters and one clause voter,
  and such that each clause voter's bundle also consists of two variable voters and one clause voter.
  Now the cycle corresponding to each variable consists of twelve vertices, as depicted in \cref{fig:gadget b=3full}.
  Moreover, $\combRule$ is full-$d$ for some $d$.
  The correctness proof is analogous to the one shown for \cref{thm:np-h-b=2}.
\end{proof}

Taking also the \swapdist $d$ into account,
we find out that both
\textsc{Plurality}-\probCCCAV and \textsc{Condorcet}-\probCCCAV
are $\np$-hard, even if $d = 1$.
This stands in contrast to the case where $d = 0$,
where $\calR$-\probCCCAV reduces to the \probCCAV problem
(perhaps for the weighted voters~\cite{FHH13}),
which,
for Plurality voting,
is polynomial-time solvable by a simple greedy algorithm.

\begin{theorem}\label{prop:np-hard:d=1:b=4}
  \thmnphdonebfour
\end{theorem}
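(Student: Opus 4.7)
The plan is to strengthen the reduction of \cref{thm:fullB3HARD} to force $\combRule$ to be full-$1$, at the cost of increasing the bundle-size bound from three to four. As before, I would reduce from \probSAT and build the same alternative set $\{p,w\}\cup\{c_i : C_i\in\mathcal{C}\}$, the same registered voters giving $w$ and each $c_i$ the thresholds used in the proof of \cref{thm:np-h-b=2}, and, for each variable $x_j$, a cyclic gadget of unregistered voters alternating between variable voters (ranking $p$ first) and clause voters (ranking the relevant $c_i$ first), exactly in the alternation pattern of \cref{fig:gadget b=3full}. The satisfiability-to-selection correspondence would then be verbatim from \cref{thm:np-h-b=2} and \cref{thm:fullB3HARD}.

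The real work, and the main obstacle, is that under a full-$1$ \assignment each voter's bundle is \emph{forced} to be the set of voters at swap distance exactly~$1$; I cannot choose bundles by hand. I therefore have to design the preference orders of the gadget voters so that the swap-distance-$1$ graph on $W$ coincides with the desired cycles (up to at most one auxiliary neighbor per vertex). To this end I would attach to every variable~$x_j$ a disjoint block of ``private'' padding alternatives, which I interleave into the preference orders of that variable's gadget. The private blocks of distinct variables, and likewise the preference orders of all voters in $V$, would sit in pairwise-different positions, guaranteeing that any preference order coming from a different gadget (or from $V$) is at swap distance much larger than~$1$ from anything in gadget~$j$. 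This eliminates all cross-gadget and cross-population edges in the bundling graph.

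Inside one gadget I would build the cycle as a ``ladder'' of preference orders $\pi_1,\pi_2,\ldots,\pi_{2r},\pi_1$, in which consecutive orders $\pi_\ell,\pi_{\ell+1}$ differ by exactly one swap of adjacent alternatives, while any two non-consecutive orders differ by a swap acting on disjoint positions and hence have swap distance at least~$2$. To simultaneously realize the variable-voter/clause-voter alternation and the full-$1$ adjacency --- in particular, to keep clause voters in different gadgets from collapsing and to insert each $c_i^{\ell_j}$ into the correct cyclic position --- a few vertices will additionally need a third neighbor, either an auxiliary anchor voter inside the gadget or a sibling clause voter of the same variable. This is precisely why the bundle bound must be relaxed from $b=3$ to $b=4$; with $b=4$ each cycle vertex tolerates up to three swap-distance-$1$ neighbors.

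Having fixed this construction, the proof finishes by a routine verification: (i)~the induced swap-distance-$1$ graph on $V\cup W$ is exactly the disjoint union of the variable cycles together with the auxiliary third neighbors, with maximum degree~$3$, so $\combRule$ is full-$1$ with $b\le 4$; and (ii)~the size-$k$ selection arguments of \cref{thm:np-h-b=2} apply verbatim, since the cyclic gadget forces any successful $W'$ to pick, on each variable cycle, either all ``positive'' or all ``negative'' clause voters, which is read off as a satisfying truth assignment for $(\mathcal{C},\mathcal{X})$. The bulk of the difficulty, therefore, lies purely in the geometric bookkeeping of padding blocks and adjacent-swap positions so that no unintended swap-distance-$1$ edges appear.
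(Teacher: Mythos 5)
Your route differs from the paper's --- the paper derives this theorem in one line from the construction of \cref{thm:SinglepeakedWhardness}, instantiated with a reduction from \probVC restricted to graphs of maximum degree three (there, each edge voter's bundle is $\{w_e, w^{u}_i, w^{u}_j, d_e\}$, each vertex voter's bundle has size $1+\deg(u_i)\le 4$, and the swap-distance calculations in that proof already certify that the bundling function is full-$1$) --- and your route runs into a structural obstruction that no amount of padding can fix. Under a \fullassignment[1], two voters are in each other's bundles only if their preference orders are identical or differ by a single adjacent swap, and the only adjacent swap that changes the top-ranked alternative is the one exchanging positions $1$ and $2$. Hence if a clause voter ranks $c_i$ first and a variable voter ranks $p$ first and the two are at swap distance one, the clause voter's order must be $c_i \succ p \succ \tau$ and the variable voter's order must be $p \succ c_i \succ \tau$ for the \emph{same} tail $\tau$. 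Consequently every clause voter admits at most one \emph{distinct preference order} among its \votertype{p} neighbors in the \bundlinggraph. But the alternation pattern of \cref{fig:gadget b=3full}, which you adopt verbatim, requires each clause voter to have two \votertype{p} neighbors on the cycle (e.g., $p^j_2$ and $p^j_3$ around $c_i^{\lnot x_j}$); these are then forced to carry identical preference orders, and since a full-$1$ bundle depends only on the preference order, they also have identical bundles. The cycle acquires chords and collapses, the bundle of $p^j_2$ swallows both $p^j_1$ and $p^j_4$ and exceeds size four, and the ``choose one of the two arcs around each variable cycle'' argument that drives the correctness proof of \cref{thm:np-h-b=2} no longer applies. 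Your proposed remedy --- granting some vertices a third neighbor and relaxing $b$ from three to four --- does not touch this obstruction, because the difficulty is not a shortage of room in the bundles but the impossibility of realizing the required adjacencies at all under a full-$1$ function.

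If you want a working proof, the cleanest path is the paper's: reuse the single-peaked construction of \cref{thm:SinglepeakedWhardness}, whose edge, dummy, and vertex voters are at swap distance exactly one from their intended bundle-mates and at distance at least two from everything else, and feed it a \probVC instance (i.e., \probPVC with $\ell=|E(G)|$) on graphs of maximum degree three, which keeps every bundle of size at most four and remains $\np$-complete.
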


\begin{proof}
   The theorem follows from the proof of
   Theorem~\ref{thm:SinglepeakedWhardness}, applied to a reduction from
   \probVC, for graphs with maximum vertex degree equal to three.
   \probVC remains $\np$-complete in this case~\cite{GJS76}.
\end{proof}

\section{Single-Peaked and Single-Crossing Elections}
\label{sec:sp+sc}
In this section, we focus on instances with \fullassignments,
and we do so because without this restriction the hardness results from
previous sections easily translate to our restricted domains (at least
for the case of the Plurality rule). We find that the results for the
combinatorial variant of control by adding voters for single-peaked
and single-crossing elections are quite different than those for the
non-combinatorial case. Indeed, both for Plurality and for Condorcet,
the voter control 
problems for single-peaked
elections and for single-crossing elections are solvable in polynomial
time for the non-combinatorial case~\cite{BBHH2010,FHHR11,MF14-ecai}.
For the combinatorial case,
we show hardness for both \textsc{Plurality}-\probCCCAV and \textsc{Condorcet}-\probCCCAV
for single-peaked elections,
but give polynomial-time algorithms for single-crossing elections.
We mention that the intractability results can also be seen as regarding 
anonymous \assignments because all \fullassignments are \leaderanonymous and \followeranonymous.

We begin with single-peaked elections.

\begin{theorem}\label{thm:SinglepeakedWhardness}
 \thmspwonehk
\end{theorem}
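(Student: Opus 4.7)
The plan is to give a parameterized reduction from \probPVC, the \wtwo{-hard---sorry, \wone-hard} problem asking whether a graph $G$ contains $h$ vertices covering at least $t$ edges, where the parameter is $h$. Given a \probPVC instance $(G, h, t)$, I would build a single-peaked election $(C, V)$ together with a set $W$ of unregistered voters and a \fullassignment[1] $\combRule$, set $k := h$, and arrange the registered voters so that the preferred alternative $p$ trails its main rival $q$ by exactly $t-1$ Plurality points; then $p$ becomes a winner iff the chosen $k$ bundles contribute at least $t$ additional points to $p$.

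The voter set $W$ would contain, for each vertex $v \in V(G)$, a ``vertex voter'' $w_v$ (whose top alternative is some non-$p$ alternative, so $w_v$ itself contributes nothing to $p$), and for each edge $e \in E(G)$, an ``edge voter'' $w_e$ with $p$ ranked first. The critical point is to arrange the alternatives on a societal axis $L$ and to pick the preference orders of the $w_v$'s and $w_e$'s so that: (a) every preference is single-peaked with respect to $L$; and (b) the swap distance between two unregistered voters is exactly $1$ iff they are a vertex-edge incident pair $(w_v, w_e)$ with $v \in e$. Under a \fullassignment[1], property (b) forces $\combRule(w_v) = \{w_v\} \cup \{w_e : e \ni v\}$ and $\combRule(w_e) = \{w_e\} \cup \{w_u, w_{u'}\}$ for $e = \{u, u'\}$ (using \cref{lem:x_in_y_y_in_x}). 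Selecting $k$ vertex voters then translates to picking $h$ vertices of $G$, and the union of their bundles contributes one point to $p$ for each edge covered.

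The hard part will be engineering preferences that achieve (a) and (b) simultaneously while avoiding ``spurious'' swap-distance-1 pairs (for instance, two vertex voters that happen to share a common swap, or an edge voter accidentally adjacent to a non-incident vertex voter). I would attempt this by introducing a long sequence of padding alternatives between each vertex/edge-related block along the axis, so that preferences of voters from different blocks disagree on the relative order of many padding alternatives and are therefore at swap distance much greater than one. Within a single vertex block, I would position the incident-edge alternatives so that each corresponding edge voter's preference order can be obtained from $w_v$'s by exactly one adjacent transposition. It should also be checked that no $w_u$ and $w_v$ for distinct vertices are at swap distance one and that the chosen edge voter picks up $p$ from its first position.

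Finally, for the Condorcet case, I would lean on the fact that in a single-peaked election a Condorcet winner is exactly the top choice of the median voter along the axis. I would pad the registered voter set with singletons (i.e., voters $w$ with $\combRule(w) = \{w\}$) placed symmetrically around the axis so that, without any additions, $q$ is the median and $p$ beats $q$ in head-to-head only after sufficiently many bundled $p$-first voters are added; the same counting argument as in the Plurality case then establishes the equivalence. Because \fullassignments[1] are automatically \leaderanonymous and \followeranonymous, the same hardness result will also apply to anonymous \assignments, as already remarked in the paper.
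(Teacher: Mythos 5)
Your high-level plan matches the paper's: a parameterized reduction from \probPVC with $k:=h$, vertex voters whose full-$1$ bundles consist of exactly the incident edge voters, and a score deficit calibrated so that winning is equivalent to covering enough edges. But the concrete design is unrealizable, and the obstruction is exactly at the point you flag as ``the hard part.'' You want each vertex voter $w_v$ to have a non-$p$ top alternative while every incident edge voter $w_e$ has $p$ on top, with $w_v$ and $w_e$ at swap distance exactly~$1$. A single adjacent transposition that changes the top alternative must act on positions $1$ and $2$; hence $w_v = X \pref p \pref \rho$ forces $w_e = p \pref X \pref \rho$ for the \emph{same} tail $\rho$. Consequently all edge voters incident to a vertex of degree at least two would have identical preference orders, and for an edge $\{u,u'\}$ both endpoints' vertex voters would be forced to coincide as well. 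Since swap distance (and thus a \fullassignment[1]) only sees preference orders, the incidence structure of $G$ collapses, and no amount of padding alternatives can help---padding only pushes voters farther apart, whereas your incident pairs must remain at distance exactly one across ``blocks.''

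The paper avoids this by making the vertex voters \voterstype{p} as well and encoding incidence away from the top of the order: every unregistered voter is the canonical order with a few disjoint adjacent pairs $\{a_i,\overline{a}_i\}$ swapped, so the vertex voter for $u_i$ swaps pair $i$, the edge voter for $\{u_i,u_j\}$ swaps pairs $i$ and $j$, and incident pairs differ by exactly one swap while all other pairs are at distance at least two. This changes the accounting (the target becomes $h+\ell$ since the $h$ vertex voters now also score for $p$) and, crucially, requires the extra \emph{dummy voters} $d_e$ (\voterstype{w} at distance one from $w_e$) to penalize selecting edge voters directly; without them, a set of $h$ edge voters drags in up to $3h$ \voterstype{p} and the backward direction fails. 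Note that in your variant the backward direction would have been fine precisely because vertex voters score nothing---but that variant cannot be realized. Your Condorcet plan via the median-voter argument is consistent with the paper's treatment, but it inherits the same broken gadget.
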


\begin{proof}
  \newcommand{\winitialscore}{\ensuremath{h+\ell}}
  \newcommand{\order}[1]{\mathrm{diff}\text{-}\mathrm{order}(#1)}
  We provide a parameterized reduction from the $\wone$-hard problem~\probPVC~(\PVC) with respect to the ``solution size'' parameter~$h$~\cite{GNW07},
  which asks for a set of at most $h$ vertices in a graph $G$, which intersects with at least $\ell$ edges.
  More formally:

   \probDef
     {\probPVC (\PVC)}
     {An undirected graph $G=(V(G), E(G))$ and $h, \ell \in \mathbb{N}$.}
     {Does $G$ admits a size-$h$ vertex subset~$U \subseteq V(G)$ which intersects at least $\ell$ edges in $G$?}

  Given a \PVC instance~$(G, h, \ell)$,
  we set $k:=h$,
  and construct an election~$E=(\electionC, \electionV)$ with $C:=\{p,w\}\cup \{a_i, \overline{a}_i, b_i, \overline{b}_i \mid u_i \in V(G)\}$
  such that the initial score of $w$ is $\winitialscore$
  and the initial scores of all the other alternatives are zero.
  We do so by creating $\winitialscore$ registered voters who all have the same preference order~$\pref$
  such that it differs from the following
  \emph{canonical preference order}:
  \[p \pref w \pref a_1 \pref \overline{a}_1 \pref \ldots \pref a_{|V(G)|} \pref \overline{a}_{|V(G)|} \pref b_1 \pref \overline{b}_1 \pref \ldots \pref b_{|V(G)|} \pref \overline{b}_{|V(G)|}
  \]
  by only the first pair~$\{p, w\}$ of alternatives.

  For each set $P$ of disjoint pairs of alternatives,
  neighboring with respect to the canonical preference order,
  we define the preference order $\order{P}$ to be identical to the canonical preference order,
  except that all the pairs of alternatives in $P$ are swapped.
  The unregistered voter set~$\electionW$ is constructed as follows:
  \begin{enumerate}[(i)]
  \item for each edge $e = \{u_i,u_j\} \in E(G)$, we create an \emph{edge
      voter}~$w_e$ with preference order~$\order{\{\{a_i,
      \overline{a}_i\}, \{a_j, \overline{a}_j\}\}}$ (we say that $w_e$
    corresponds to edge~$e$),
  \item for each edge $e = \{u_i,u_j\} \in E(G)$, we create a \emph{dummy
      voter}~$d_e$ with preference order~$\order{\{\{p,w\},\{a_i,
      \overline{a}_i\}, \{a_j,$ $\overline{a}_j\}\}}$ (we say that
    $d_e$ corresponds to edge~$e$), and
  \item for each vertex $u_i \in V(G)$, we create a \emph{vertex voter}~$w^{u}_{i}$ with preference
    order~$\order{\{\{a_i, \overline{a}_i\}\}}$ (we say that $w^{u}_i$ corresponds to $u_i$). 

  \end{enumerate} 
  The preference orders of the voters in $\electionV \cup \electionW$ are single-peaked with respect to the axis
  \[\seq{\overline{B}} \pref \seq{\overline{A}} \pref p \pref w \pref \seq{A} \pref \seq{B},\]
  where
  \begin{alignat*}{2}
  \seq{\overline{B}} &:= \overline{b}_{|V(G)|} \pref \overline{b}_{|V(G)|-1} \pref \ldots \pref \overline{b}_{1}, &\quad \seq{\overline{A}} &:= \overline{a}_{|V(G)|} \pref \overline{a}_{|V(G)|-1} \pref \ldots \pref \overline{a}_{1},\\
  \seq{B} &:= b_1 \pref b_2 \pref \ldots \pref b_{|V(G)|}, &\quad
  \seq{A} &:= a_1 \pref a_2 \pref \ldots \pref a_{|V(G)|}. 
\end{alignat*}
  
Finally, we define the function~$\combRule$ such that it is a \fullassignment[1].
  To understand how $\combRule$ works, we carefully
  calculate the swap distance between the preference orders of all
  possible pairs of voters in $W$. We see that:
  \begin{enumerate}[(a)]
  \item any two edge voters have swap distance at least two,
  \item any edge voter and any dummy voter have swap distance exactly
    one if they correspond to the same edge, and at least three
    otherwise,
  \item any edge voter~$w_e$ and any vertex voter~$w^{u}_i$ have swap
    distance one if $u_i \in e$, and three otherwise,
  \item any two dummy voters have swap distance at least two,
  \item any dummy voter and any vertex voter have swap distance at
    least two, and
  \item any two vertex voters have swap distance two.
  \end{enumerate}
  Thus, for each edge $e = \{u_i, u_j\} \in E(G)$ we have
  $\combRule(w_e) := \{w_e, w^{u}_i, w^{u}_j, d_e\}$ and
  $\combRule(d_e) := \{w_e, d_e\}$, and for each vertex $u_i \in V(G)$
  we have $\combRule(w^u_i) := \{w^{u}_i\} \cup \{w_e \mid u_i \in e
  \in E(G)\}$.

  We show that $(G, h, \ell)$ is a yes-instance for \PVC if and only
  if there is a size-$k$ subset~$W'\subseteq W$ such that $p$ is a
  Plurality winner of the election~$(\electionC, \electionV\cup
  \combRule(\electionW'))$.  Note that all unregistered voters except
  the dummy voters prefer~$p$ over all other alternatives and that $p$
  needs at least $\winitialscore$~points in order to win.
  
  For the ``only if'' part, suppose that $X\subseteq V(G)$ is a
  size-$h$ vertex set and $Y\subseteq E(G)$ is a size-$\ell$ edge set
  such that for every edge $e\in Y$ it holds that $e\cap X\neq
  \emptyset$.  We set $\electionW':=\{w^{u}_i \mid u_i \in X\}$, and
  it is easy to verify that $\combRule(\electionW')$ consists of $h$
  vertex voters and at least $\ell$ edge voters.  Each of them gives
  $p$ one point if added to the election.  This results in $p$ being a
  winner of the election with score at least $\winitialscore$.

  For the ``if'' part, suppose that there is a size-$k$
  subset~$W'\subseteq W$ such that $p$ is a Plurality winner of
  the election~$(\electionC, \electionV\cup \combRule(\electionW'))$.
  Observe that if $\electionW'$ contains some dummy voter~$d_e$, then
  we can replace it with $w_e$ (if $w_e$ is already in
  $\electionW'$ then we can simply remove $d_e$ from $\electionW'$).
  Thus we can assume that $\electionW'$ does not contain any dummy
  voters. Now, assume that $\electionW'$ contains some edge
  voter~$w_e$, where $e = \{u_i, u_j\}$. Since, by the previous argument,
  $\electionW'$ does not contain~$d_e$, we have that $d_e$ is not a
  member of $\combRule(\electionW' \setminus \{w_e\})$. This means
  that if both $w_{u_i}$ and $w_{u_j}$ belong to
  $\combRule(\electionW' \setminus \{w_e\})$ then we can safely remove
  $w_e$ from $W'$; $p$ will still be a winner of the election~$(C,\electionV \cup \combRule(W' \setminus \{w_e\}))$. On the other hand,
  assume that exactly one of $w_{u_i}$, $w_{u_j}$ does not belong to
  $\combRule(\electionW' \setminus \{w_e\})$ and let $w_u$ be this
  voter.  It is easy to see that $p$ is a winner of election
  $(\electionC, \electionV\cup \combRule((\electionW' \setminus
  \{w_e\}) \cup \{w_u\}))$ (the net effect of including the bundle of
  $w_e$ is that $p$'s score increases by at most one, whereas the net
  effect of including the bundle of $w_u$ is that $p$'s score
  increases by at least one). Similarly, if neither $u_i$ nor $u_j$
  belong to $\combRule(\electionW' \setminus \{w_e\})$, then it is easy
  to verify that $p$ is a winner of the election~$(\electionC,
  \electionV\cup \combRule((\electionW' \setminus \{w_e\}) \cup
  \{w_{u_i}\}))$. All in all, we can assume that $\electionW'$ contains
  vertex voters only. Since all vertex voters are $p$-voters, without
  loss of generality we can assume that $\electionW'$ contains exactly $k = h$ of
  them.
  

  We define $X:=\{u_i \mid w^{u}_i \in \electionW'\}$ such that $|X| = k$, and $Y:=\{e \in E(G) \mid e \cap u_i \neq \emptyset\}$.
  By the construction of the edge voters' preference orders, 
  $\combRule(\electionW')$ consists of $k$ vertex voters and $|Y|$ edge voters.
  This must add up to at least $\winitialscore$ voters.
  Therefore, $|Y|\ge \ell$, implying that at least $\ell$ edges are covered by $X$.\medskip

  As for the Condorcet rule, 
  we use the same unregistered voters as defined above 
  and construct the original election with
  $\winitialscore-1$ registered voters whose preference orders are $\order{\{w,p\}}$.
  Using the same reasoning as used for the Plurality rule, 
  one can verify that $(G,h,\ell)$ is a yes-instance for \PVC if and only if there is a size-$k$ subset $W'\subseteq W$ such that $p$ is a Condorcet winner of the election~$(\electionC, \electionV\cup \combRule(\electionW'))$.
\end{proof}

We now present some tractability results for single-crossing elections.
Consider an $\calR$-$\probCCCAV$ instance~$\probCCCAVInstance$,
containing an election $(C,V)$ and an unregistered voter set~$W$
such that $(C,V \cup W)$ is single-crossing,
 and thus, both $(C,V)$ and $(C,W)$ are single-crossing.
This has a crucial consequence for
\fullassignments: For each unregistered voter~$w\in
\electionW$, the voters in bundle~$\combRule(w)$ appear consecutively
along the single-crossing order restricted to only the voters in
$\electionW$.\footnote{Note that for each single-crossing election, the
  order of the voters possessing the single-crossing property is, in essence,
  unique. (modulo voters with the same preference orders and modulo the
  fact that if an order witnesses the single-crossing property of an election,
  then its reverse does so as well).}
Using the following lemmas, we can show that \textsc{Plurality}-\probCCCAV and
\textsc{Condorcet}-\probCCCAV are polynomial-time solvable in some cases.


\begin{lemma}\label{lem:plurality-single-crossing}
 \lemplusc
\end{lemma}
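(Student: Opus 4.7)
For part (i), the plan is to apply the single-crossing property directly. If $v_1, v_2, v_3$ appear in this order along the single-crossing order of $(\electionC, \electionV \cup \electionW)$ and both $v_1$ and $v_3$ are \voterstype{p}, then for every alternative $a \neq p$ we have $p \succ_{v_1} a$ and $p \succ_{v_3} a$, so the single-crossing property applied to the pair $(p, a)$ forces $p \succ_{v_2} a$. Since this holds for every $a \neq p$, voter $v_2$ also ranks $p$ first, and hence is a \votertype{p}. Thus the \voterstype{p} form a consecutive block along the single-crossing order.

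For part (ii), my plan rests on two structural facts about \fullassignments on a single-crossing election. First, the swap distance is additive along the single-crossing order restricted to $\electionW$: for three voters $w_1, w_2, w_3$ appearing in this order, $d(w_1, w_3) = d(w_1, w_2) + d(w_2, w_3)$, because each pair of alternatives whose ranking flips between $w_1$ and $w_3$ must flip on exactly one side of $w_2$ by the single-crossing property. As a consequence, each bundle $\combRule(w) = \{w' : d(w, w') \leq d\}$ is a contiguous interval $[\ell_w, r_w]$ in the single-crossing order on $\electionW$. Second, these endpoints are monotone in $w$: if $w_1$ precedes $w_2$, then $\ell_{w_1}$ comes at or before $\ell_{w_2}$ (and symmetrically for right endpoints); otherwise, additivity applied to the triple $\ell_{w_2}, w_1, w_2$ would yield $d(\ell_{w_2}, w_1) \leq d$, forcing $\ell_{w_2} \in \combRule(w_1)$ and contradicting that $\ell_{w_1}$ is the leftmost element of $\combRule(w_1)$.

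Given these two facts, I let $\electionW' \subseteq \electionW$ be a solution of minimum size (still at most $k$), and let $[s, t]$ denote the block of \voterstype{p} in $\electionW$ (consecutive by (i)). A bundle $\combRule(w)$ with $w \in \electionW'$ contains a non-\votertype{p} precisely when $\ell_w$ lies before $s$ (\emph{left-extending}) or $r_w$ lies after $t$ (\emph{right-extending}). If $\electionW'$ contained two left-extending bundles with leaders $w_1$ preceding $w_2$, monotonicity would give $\combRule(w_1) \cap [s, t] \subseteq \combRule(w_2) \cap [s, t]$, so dropping $w_1$ would preserve the set of \voterstype{p} in $\combRule(\electionW')$ while possibly removing some non-\voterstype{p}. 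Then $p$ would still win with $\electionW' \setminus \{w_1\}$, contradicting the minimality of $|\electionW'|$; a symmetric argument rules out two right-extending bundles. Since every bundle containing a non-\votertype{p} is left- or right-extending (possibly both), $\electionW'$ has at most two such bundles. The main obstacle I expect is cleanly establishing the monotonicity of bundle endpoints; once it is in place, the removal argument reduces to a single step.
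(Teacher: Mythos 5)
Your proof is correct and follows essentially the same route as the paper's: part (i) is the direct single-crossing argument, and part (ii) is the same exchange argument showing that, since each bundle of a \fullassignment forms an interval along the single-crossing order, at most one selected bundle can straddle each end of the consecutive block of $p$-voters, so a minimal solution has at most two bundles containing non-$p$-voters. The only difference is organizational: you derive the interval structure and an endpoint-monotonicity lemma from the additivity of the swap distance along the single-crossing order (facts the paper asserts without proof in the text preceding the lemma), whereas the paper compares the two competing bundles' first and last elements directly via a two-case analysis; both reduce to the same domination-and-removal step.
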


\begin{proof}
  Let $n:= |\electionW|$ and 
  let $\alpha:=\seq{w_1,w_2,\ldots,w_{n}}$ be a single-crossing order of the voters in $\electionW$.
  Item~(\ref{lem:p-voters-consecutive}) follows directly from the definition of the single-crossing property. 
  
  As for Item~(\ref{lem:|non-p-voter-bundles|<=2}), 
  let $\electionW'\subseteq \electionW$ be a size-$k$ subset of unregistered voters 
  such that $p$ is a Plurality winner in election~$(\electionC, \electionV \cup \combRule(\electionW'))$.
  For each subset~$S\subseteq W$ of voters,
  we use $\firstvoter{S}$ (resp.\ $\secondvoter{S}$)
  to denote the index~$j$ (resp. $j'$) of the first voter~$w_j\in S$ (resp.\ the last voter~$w_{j'} \in S$) along the single-crossing order. 
  Suppose that there are two bundles, $\combRule(w_i)$ and $\combRule(w_j)$, with $\firstvoter{\combRule(w_i)} \le \firstvoter{\combRule(w_j)}$
  such that both contain non-\voterstype{p} and the first \votertype{p} along $\alpha$.
  If $\secondvoter{\combRule(w_i)} \le \secondvoter{\combRule(w_j)}$,
  then $\combRule(w_i)$ does not contain more \voterstype{p} than $\combRule(w_j)$ does,
  while containing at least as many non-\voterstype{p} as $\combRule(w_j)$.
  Thus, we can remove $w_i$ from $\electionW'$.
  Otherwise, $\secondvoter{\combRule(w_i)} > \secondvoter{\combRule(w_j)}$,
  which means that $\combRule(w_j) \subset \combRule(w_i)$.
  Thus, we can remove $w_j$ from $\electionW'$.
  In any case, we conclude that $\electionW'$ contains at most one voter~$w$ whose bundle~$\combRule(w)$ contains a non-\votertype{p} and the first \votertype{p} (along the single-crossing order).
  
  Analogously, we can show that $\electionW'$ contains at most one voter~$w$ whose bundle~$\combRule(w)$ contains a non-\votertype{p} and the last \votertype{p} (along the single-crossing order).
  Since for each bundle~$\combRule(w)$ with $w\in \electionW'$, 
  if $\combRule(w)$ contains a non-\votertype{p}, 
  then it contains at least one of the first and last voters along~$\alpha$,
  every bundle $\combRule(w)$ with $w\in \electionW'$ contains at least one \votertype{p}
  (because if it does not, then we can remove its respective leader voter, as the bundle does not help $p$),
  and Item~(\ref{lem:|non-p-voter-bundles|<=2}) follows.
\end{proof}

 For Condorcet voting, we use the well-known median-voter theorem (we
 provide the proof for the sake of completeness).

\begin{lemma}\label{lem:condorcet-single-crossing}
  \lemcondsc
\end{lemma}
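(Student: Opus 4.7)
The plan is to exploit the fundamental structural property of single-crossing elections: for any pair of alternatives $a$ and $b$, the set of voters preferring $a$ to $b$ is either a prefix or a suffix of the single-crossing order $\langle x_1, \ldots, x_z\rangle$. This follows from the single-crossing definition applied to both orientations of the pair: both $\{v : a \succ_v b\}$ and $\{v : b \succ_v a\}$ are intervals of the order, and since they partition the voter set, one must be a prefix and the other a suffix. I would establish this observation first, as it is the only property of single-crossingness used in the argument.

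For the ``if'' direction, assume every voter in $X_{\mathrm{median}}$ ranks~$p$ first, and fix an arbitrary alternative $c \neq p$. The set $S := \{v : p \succ_v c\}$ is a prefix or suffix of the order and contains $X_{\mathrm{median}}$, since each voter in $X_{\mathrm{median}}$ has $p$ at the top of her preference order. A direct index count then yields $|S| > z/2$ in both parities: when $z$ is odd, any prefix or suffix containing $x_{(z+1)/2}$ has size at least $(z+1)/2$; when $z$ is even, any prefix or suffix containing both $x_{z/2}$ and $x_{z/2+1}$ has size at least $z/2 + 1$. Therefore $p$ strictly beats every $c \neq p$ pairwise, and $p$ is the unique Condorcet winner.

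For the ``only if'' direction I prove the contrapositive. Suppose some $x_m \in X_{\mathrm{median}}$ is not a \votertype{p}, and let $c$ be her top-ranked alternative, so $c \succ_{x_m} p$. The set $T := \{v : c \succ_v p\}$ is a prefix or suffix containing $x_m$, and by going through the four subcases (parity of $z$ crossed with prefix/suffix) one checks that $|T| \ge z/2$ always. Hence $|\{v : p \succ_v c\}| = z - |T| \le z/2 \le |T|$, so $p$ does not strictly beat $c$ in the head-to-head contest and is not the unique Condorcet winner, a contradiction.

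The main point requiring care is the even case where only one of $x_{z/2}$ and $x_{z/2+1}$ fails to rank $p$ first: here $T$ can have size exactly $z/2$, and the argument relies on the observation that this is already enough to prevent $p$ from \emph{strictly} winning the pairwise contest against $c$. Aside from this bookkeeping, the proof is a routine combination of the prefix/suffix structure for single-crossing elections with the strict-majority definition of a Condorcet winner.
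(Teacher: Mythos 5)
Your proof is correct and follows essentially the same median-voter argument as the paper: both rest on the fact that for any pair of alternatives the set of voters preferring one to the other is a prefix or suffix of the single-crossing order, and then count sizes relative to the median position(s). Your write-up is somewhat more explicit about the parity bookkeeping (in particular the even-$z$ case, where $|T| = z/2$ already suffices to block a strict pairwise win), which the paper compresses into an ``analogously.''
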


\begin{proof}
  Let $X_1$ be the set of voters~$x_1,x_2,\ldots, x_{\lceil z/2 \rceil - 1}$ and
  let $X_2$ be the set of voters~$\electionV \cup \combRule(\electionW') \setminus (X_1 \cup X_{\mathrm{median}})$.

  For the ``if'' part, let $c$ be an arbitrary alternative from $\electionC \setminus \{p\}$.
  Then, if there is some voter in $X_1$ which prefers $c$ over $p$,
  then all voters in $X_{\mathrm{median}}\cup X_2$ prefer $p$ over $c$.
  If there is some voter in $X_2$ which prefers $c$ over $p$,
  then all voters in $X_1 \cup X_{\mathrm{median}}$ prefer $p$ over $c$.
  In any case, a strict majority of voters prefer $p$ over $c$.
  Thus, $p$ is the (unique) Condorcet winner.

  For the ``only if'' part, suppose for the sake of contradiction that 
  there is a voter in $X_{\mathrm{median}}$ which is not a \votertype{p} but a \votertype{c} with $c\in \electionC \setminus \{p\}$. 
  Then, analogously to the reasoning above, 
  at least half of the voters will prefer $c$ over $p$---a contradiction.  
\end{proof}

 With these two lemmas available,
 we give polynomial-time
 algorithms for both \textsc{Plurality}-\probCCCAV and
 \textsc{Condorcet}-\probCCCAV, for the case of single-crossing
 elections and \fullassignments. 


\begin{theorem}\label{thm:cccav-P-sc}
  \thmPsc
\end{theorem}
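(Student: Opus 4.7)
The plan is to apply Lemmas~\ref{lem:plurality-single-crossing} and~\ref{lem:condorcet-single-crossing} to reduce each problem to a polynomially-solvable optimization over intervals. Throughout, we first compute a single-crossing order of $\electionV \cup \electionW$ in polynomial time (using the algorithms cited in the preliminaries) and exploit the fact that, under a \fullassignment on a single-crossing election, every bundle $\combRule(w)$ is a consecutive sub-interval of this order.

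For Plurality, Lemma~\ref{lem:plurality-single-crossing}(i) gives a consecutive block $[l,r]$ of \voterstype{p} in $\electionV \cup \electionW$. By Lemma~\ref{lem:plurality-single-crossing}(ii), we may restrict attention to solutions $\electionW'$ in which at most two \emph{boundary} bundles extend outside $[l,r]$, while all remaining bundles lie fully inside $[l,r]$ and thus contribute only to $p$'s Plurality score. We enumerate the $O(|\electionW|^2)$ choices of these (zero, one, or two) boundary bundles; for each such choice we deduct the corresponding score increments from every alternative and are left with a residual budget $k' \le k$ for interior bundles. Since every interior bundle is a consecutive sub-interval of $[l,r]$ consisting only of \voterstype{p}, the residual task is a standard maximum-coverage-by-$k'$-intervals-on-a-line problem: pick up to $k'$ intervals so as to maximise the size of their union. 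This is solved by a polynomial-time dynamic program that processes bundles in left-endpoint order and, in each state, records the number of bundles selected together with the rightmost already-covered position, so that the marginal gain from a new bundle is computed without double-counting. We finally verify whether $p$'s resulting score ties or beats every other alternative; we accept iff some enumeration succeeds.

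For Condorcet, Lemma~\ref{lem:condorcet-single-crossing} characterises $p$ as the (unique) Condorcet winner of $(\electionC, \electionV \cup \combRule(\electionW'))$ precisely when the median voter(s) in the single-crossing order is/are \voterstype{p}. Again the \voterstype{p} in $\electionV \cup \electionW$ form a consecutive block $[l,r]$. Let $a$, $b$, and $q$ be, respectively, the numbers of non-\voterstype{p} of $\electionV$ to the left of $[l,r]$, to the right of $[l,r]$, and the number of \voterstype{p} of $\electionV$. Any solution $\electionW'$ increments these counts by some $a'$, $b'$, $p'$, and the median condition on $z=(a+a')+(q+p')+(b+b')$ reduces to a pair of explicit inequalities between $a+a'$, $b+b'$, and $q+p'$. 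We enumerate the $O(|\electionW|^2)$ pairs $(a',b')$ and, for each, test in polynomial time via essentially the same interval DP whether there exist at most $k$ bundles achieving precisely $a'$ non-\voterstype{p} on the left and $b'$ on the right; $p'$ may be taken freely since, once $a'$ and $b'$ are fixed, adding more \voterstype{p} only helps the median condition.

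The main obstacle is the interval DP subroutine: one must track just enough information to avoid double-counting voters that lie in overlapping bundles, and, for the Condorcet variant, simultaneously certify the split of non-\voterstype{p} contributions between the left and the right of $[l,r]$. Processing bundles in left-endpoint order and memoising the rightmost covered position on each side together with the number of bundles used suffices, because the single-crossing order linearises all bundles. Combined with the polynomial outer enumeration over boundary bundles (Plurality) or over non-\votertype{p} addition targets (Condorcet), the whole procedure runs in polynomial time, proving the theorem.
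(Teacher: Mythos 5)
Your proof is correct and follows essentially the same route as the paper's: exploit that bundles are consecutive intervals along the single-crossing order, guess the at most two boundary bundles (Plurality) respectively the relevant count parameters (Condorcet), and reduce the remaining selection of pure-$p$-voter bundles to a polynomial-time maximum-interval-coverage computation. The only cosmetic difference is in the Condorcet case, where you enumerate the numbers of non-$p$-voters added on each side of the $p$-voter block, whereas the paper guesses the bundle(s) containing the median position together with the total numbers of voters added on each side; both decompositions rest on the same median-voter lemma and the same interval dynamic program.
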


\begin{proof}
  \newcommand{\diff}[2]{\ensuremath{\combRule(w_{#1})\setminus\combRule(w_{#2})}}
    \newcommand{\probMaxIntCov}{\textsc{Maximum Interval Cover}\xspace}
  First, we find a (unique) single-crossing voter order for $(\electionC, \electionV \cup \electionW)$ in quadratic time~\cite{EFS12,BCW13}.
  Due to~\cref{lem:plurality-single-crossing} and~\cref{lem:condorcet-single-crossing},
  we only need to store the most preferred alternative of each voter
  to find the solution set~$W'$.
  Thus, the running-time from now on only depends on the number of voters.
  We start with the Plurality rule
  and let $\alpha:=\seq{w_1,w_2,$ $\ldots,w_{|W|}}$ be a single-crossing voter order.
  

  Due to
  \cref{lem:plurality-single-crossing}~(\ref{lem:|non-p-voter-bundles|<=2}),
  the two bundles in $\combRule(\electionW')$ which may contain
  non-\voterstype{p} appear at the beginning and at the end of the
  \votertype{p} block, along the single-crossing order. 
  We first guess these two bundles,
  and after this initial guess, all remaining bundles in the solution contain only
  \voterstype{p}~(\cref{lem:plurality-single-crossing}~(\ref{lem:p-voters-consecutive})).
  Thus, the remaining task is to find the maximum score that $p$ can
  gain by selecting $k'$~bundles containing only \voterstype{p}.
  This problem is equivalent to the \probMaxIntCov problem,
  which is solvable in $O(|W|^2)$ time (\citet[Section 3.2]{GKKSS09}).
  
      

  For the Condorcet rule,
  we propose a slightly different algorithm.
  The goal is to find a minimum-size subset~$\electionW'\subseteq \electionW$ such that
  $p$ is the (unique) Condorcet winner in $(\electionC, V\cup \combRule(\electionW'))$.
  Let $\beta:=\seq{x_1, x_2, \ldots, x_{z}}$ be a single-crossing voter order for $(\electionC, \electionV \cup \electionW)$.
  Considering \cref{lem:condorcet-single-crossing},
  we begin by guessing at most two voters in $\electionV\cup \electionW$
  whose bundles may contain the median \votertype{p} (or, possibly, several \voterstype{p})
  along the single-crossing order of voters restricted to the final election
  (for simplicity, we define the bundle of each registered voter to be its singleton).
  The voters in the union of these two bundles must be consecutively ordered.
  Let those voters be~$x_i, x_{i+1}, \ldots, x_{i+j}$ (where $i\ge 1$ and $j\ge 0$),
  let $\electionW_1 := \{x_s \in \electionW \mid s < i\}$,
  and let $\electionW_2 := \{x_s \in \electionW \mid s>i+j\}$. 
  We guess two integers~$z_1 \le |\electionW_1|$ 
  and $z_2 \le |\electionW_1|$ 
  with the property that
  there are two subsets~$B_1\subseteq \electionW_1$ 
  and $B_2\subseteq \electionW_2$ with $|B_1|=z_1$ and $|B_2|=z_2$ such that
  the median voter(s) in $V\cup B_1 \cup \{x_i, x_{i+1}, \ldots, x_{i+j}\} \cup B_2$ are indeed \voterstype{p}
  (for now, only the sizes $z_1$ and $z_2$ matter, not the actual sets). 
  These four guesses cost $O(|\electionV \cup \electionW|^2 \cdot |\electionW|^2)$ time.
  The remaining task is to find two minimum-size subsets~$\electionW'_1$
  and $\electionW'_2$ such that $\combRule(\electionW'_1) \subseteq \electionW_1$, $\combRule(\electionW'_2)\subseteq W_2$, $|\combRule(\electionW'_1)|=z_1$, and $|\combRule(\electionW'_2)|=z_2$.
  As already discussed, this can be done in $O(|W|^2)$ time~\cite{GKKSS09}.
  We conclude that
  one can find a minimum-size subset~$\electionW'\subseteq \electionW$ such that
  $p$ is the (unique) Condorcet winner in $(\electionC, V\cup \combRule(\electionW'))$ in $O(|V\cup W|^2\cdot |W|^4)$ time.
\end{proof}



\section{Conclusion}
%
%
%
We provide 
opportunities for future research.  First, we did not
discuss destructive control and the related problem of
combinatorial deletion of voters.
For Plurality, we conjecture that combinatorial addition of voters for
destructive control, and combinatorial deletion of voters for either
constructive or destructive control behave similarly to
combinatorial addition of voters for constructive control.

Another, even wider field of future research is to study other
combinatorial voting models---this may include controlling the swap
distance, ``probabilistic bundling'', ``reverse bundling'', or using
other distance measures than the swap distance. Naturally, it would
also be interesting to consider other problems than election control
(with bribery 
being perhaps the most natural candidate).

Finally, instead of studying a ``leader-follower model'' as we did,
one might also be interested in an ``enemy model'' referring to
control by adding alternatives: The alternatives of an election
``hate'' each other such that if one alternative is added to the
election, then all of its enemies are also added to the election.
This scenario of combinatorial candidate control deserves future
investigation.


\newcommand{\bibremark}[1]{}

\end{document}